\title{New Insights from an Analysis of Social Influence Networks under the Linear Threshold Model}
\author{
\alignauthor
Srinivasan Venkatramanan\\
    \affaddr{Department of Electrical Communication Engineering}\\
    \affaddr{Indian Institute of Science}\\
    \affaddr{Bangalore}\\
    \email{vsrini@ece.iisc.ernet.in}
\alignauthor
Anurag Kumar\\
    \affaddr{Department of Electrical Communication Engineering}\\
    \affaddr{Indian Institute of Science}\\
    \affaddr{Bangalore}\\
    \email{anurag@ece.iisc.ernet.in}
}
	\newtheorem{theorem}{Theorem}[section]
	\newtheorem{lemma}{Lemma}[section]
	\newtheorem{sublemma}{Sublemma}[section]
\begin{document}
\maketitle
\begin{abstract}
We study the spread of influence in a social network based on the Linear Threshold model. We derive an analytical expression for evaluating the expected size of the eventual influenced set for a given initial set, using the probability of activation for each node in the social network. We then provide an equivalent interpretation for the influence spread, in terms of acyclic path probabilities in the Markov chain obtained by reversing the edges in the social network influence graph. We use some properties of such acyclic path probabilities to provide an alternate proof for the submodularity of the influence function. We illustrate the usefulness of the analytical expression in estimating the most influential set, in special cases such as the UILT(Uniform Influence Linear Threshold), USLT(Uniform Susceptance Linear Threshold) and node-degree based influence models. We show that the PageRank heuristic is either provably optimal or performs very well in the above models, and explore its limitations in more general cases. Finally, based on the insights obtained from the analytical expressions, we provide an efficient algorithm which approximates the greedy algorithm for the influence maximization problem.
\end{abstract}

\category{F.2.2}{Analysis of Algorithms and Problem Complexity}{Non-numerical Algorithms and Problems}

\keywords{Social networks, Spread of influence, Linear threshold, Markov chains, self avoiding paths, PageRank} 

\section{Introduction}

A social network models a set of entities (such as individuals or organizations) that are tied by one or more types of interdependency (such as friendship, collaboration or co-authorship). Typically each individual is a node in the social network, and there is an edge between two nodes, if there exists some form of interaction between them. Real world social networks such as scientific collaboration networks, have been observed \cite{newman01scientific-collaboration} to exhibit several properties of complex networks, such as scale-free degree distribution and the small-world phenomenon. Given a social network, there are several well established node-selection heuristics such as degree centrality and distance centrality whose effectiveness have been analysed in \cite{wasserman94social-net-analysis}. In this paper we analyze and derive new insights on the spread of influence under the Linear Threshold model studied by Kempe et al.~\cite{kempe-etal03max-spread-infl}.

\textbf{Related Literature:} Social networks play a fundamental role as a medium for the spread of information, ideas and influence among its members. Network diffusion processes have been investigated extensively in the past, with focus on spread of epidemics, diffusion of innovation and decision models. The concept of using threshold models to explain collective behaviour was first put forward by Granovetter in \cite{granovetter78threshold-models}, where he discusses the spread of binary decisions, among a group of rational agents, for instance in voting models. Similar behaviours can also be observed in cases of innovation adoption, rumour and disease spreading. Newman \cite{newman02spread-disease-nets} studied the spread of disease on networks under the susceptible-infected-removed (SIR) model and showed how concepts from percolation theory can be used to study these models on a wide variety of networks. 
 
Domingos and Richardson \cite{domingos01mining-network-value,domingos02mining-viral-marketing} were the first to study information diffusion under the viral marketing perspective, and they proposed the concept of a customer's network value, apart from his intrinsic value. They were also the first to  pose the combinatorial optimization problem of choosing the initial set of customers to maximize the net profits, and showed that choosing the right set of users for the marketing campaign could make a large difference. Kempe et al.\ \cite{kempe-etal03max-spread-infl} studied the problem of choosing the most influential initial set using two different models of information propagation, namely the Linear Threshold model (LT model) and the Independent Cascade model (IC model), and showed that the problem is NP-hard and the objective function is \emph{submodular}. They proposed a \emph{greedy approximation algorithm} that was shown to achieve an approximation factor of $(1-1/e)$. They also provided generalizations of the two models, and showed how the two generalized models can be made equivalent. 

Web page ranking algorithms such as Google's PageRank \cite{brin-page99pagerank} can also be extended as a heuristic to the social network context, for ranking nodes in order of influence. Kimura et al.\ \cite{kimura06tractable-models} develop upon the Independent Cascade model introduced in \cite{kempe-etal03max-spread-infl} and suggest two special cases of the IC model, which are computationally more efficient, and are good approximations to the IC model when the propagation probabilities are small. Kimura et al.\ \cite{kimura07infl-nodes-bond-percolation} have also used the concept of bond percolation, to easily evaluate the expected influence of a given set of nodes, and hence proposed a faster version of the greedy algorithm. In \cite{leskovec-etal07cost-effective-outbreakdetection} the authors propose a general framework for cost effective outbreak detection, of which the influence maximization problem is a special case, and, by exploiting the submodularity of the influence function, propose the CELF algorithm which achieves close to greedy algorithm performance. Wei Chen et al. \cite{chen-etal09infl-maximization} study the IC model and propose an improved version of the greedy algorithm and also the degree discount heuristic which are found to perform on par with the greedy algorithm.

\textbf{Our Contributions:}We develop upon the Linear Threshold model studied by Kempe et al.\cite{kempe-etal03max-spread-infl}. Our major contributions are as follows:
\begin{itemize}
\item  We derive recursive expressions for the expected influence of a given initial set (in Section~\ref{recur-exp}), provide an interpretation via Acyclic Path Probabilities in Markov chains,and provide an alternate proof of submodularity of the objective function (in Sections~\ref{DTMC-interpret} and \ref{sec-submod-monotone}).

\item We provide some sample cases where the PageRank algorithm is provably optimal or performs very well (in Section~\ref{examples}) and subsequently we discuss the limitations of PageRank in more general cases.

\item We also propose the G1-Sieving algorithm to find the most influential set, based on the insights derived from the recursive expression(in Section~\ref{sec-g1s}) and find that G1-sieving performs almost on par with the Greedy algorithm and is also very efficient in terms of computation. 
\end{itemize}

\section{The Social Network Model}
\label{sec-model}
\subsection*{Glossary of Notation}
$\mathcal{N}$ - weighted directed graph of the entire social network \newline
$\mathcal{N}\backslash \mathcal{A}$ - graph obtained by removing nodes in $\mathcal{A}\subseteq \mathcal{N}$ and all links to or from these nodes\newline
$\mathbf{W}$ - influence matrix with $w_{i,j}$ as entries, gives the edge weights of $\mathcal{N}$ \newline
$\Theta_{j}$ - $U[0,1]$ random threshold chosen by node $j$ \newline
$b_{j}(A) = \sum_{i \in A} w_{i,j}$, total influence into node $j$ from set $A$\newline
$\mathcal{A}_{0}$ - Initial active set \newline
$A_{k}$ - Set of all active nodes at time step $k$, $A_0 \subset A_1 \subset A_2 \ldots$ \newline
$D_{k}$ - Set of nodes which were activated at time step $k$, \( D_{k} = A_{k} \backslash A_{k-1} \) \newline
$S$ - Random time at which the activation process stops, \( S = \min_{k}\{ A_{k} = A_{k-1} \} \) \newline
$g_j^{(\mathcal{N},\mathcal{A})} (k)$ =  $\mathbb{P}^{(\mathcal{N},\mathcal{A})}(j \in D_{k})$  = $\mathbb{P}^{(\mathcal{N})}(j \in D_{k} \big{|}\mathcal{A}_0 = \mathcal{A})$\newline
$g_j^{(\mathcal{N},\mathcal{A})}$ =  $\mathbb{P}^{(\mathcal{N},\mathcal{A})} (j \in A_{S})$ \newline
$\sigma^{(\mathcal{N},\mathcal{A})}$ = $\mathbb{E}^{(\mathcal{N},\mathcal{A})}[|A_S|]$\newline

\paragraph{Social Network Description}
In this work, we adopt a model in which a social network is a weighted directed graph $\mathcal{N}=(V,E)$, where the edge weights $w_{i,j}$ give a measure of influence of node $i$ on node $j$. The activation process begins with an initial set of active nodes $\mathcal{A}_{0}$, and at each step $k$ the set of active nodes $A_{k}$ keeps increasing, due to the influence of the already active nodes. This goes on until a \emph{terminal set} $A_{S}$ is reached, from where the activation process cannot proceed further. We shall focus only on the \emph{progressive case}, where nodes once activated, will never switch back to the inactive state.

\paragraph{Activation Models}
There are two widely used activation models, namely, \emph{Linear Threshold model} and \emph{Independent Cascade model}. In the \emph{Linear Threshold model}, we ensure that \( \sum_{i\neq j}  w_{i,j} \leq 1 \). In this model, each node $j$ randomly chooses a threshold $\Theta_{j}$ uniformly from [0,1] \emph{at the beginning}; At step $k$, a node $j$ gets activated if, it had been inactive until step $k-1$ and
\[ \sum_{i \in A_{k-1}}   w_{i,j} \geq \Theta_{j} \] 
In the \emph{Independent Cascade model}, we start with an initial active set $\mathcal{A}_{0}$ and the activation proceeds according to the following randomized rule. Whenever a node $i$ becomes active at step $k$, it is given one attempt at activating each of its inactive neighbours $j$, succeeding with probability $w_{i,j}$. If $i$ succeeds, $j$ becomes part of $A_{k+1}$, but whether or not $i$ succeeds, it cannot make any more attempts at activating its neighbours in subsequent rounds. Again, the activation process continues until no more activations are possible. Kempe et al. also provide generalizations of the above two models in \cite{kempe-etal03max-spread-infl}, and show how the two generalized models can be made equivalent. In the remaining sections of this paper, we shall be discussing only the Linear Threshold model. 
 
\paragraph{Problem Statement}
Given the initial set $\mathcal{A}_0$, the activation process evolves in discrete time steps according to the Linear Threshold model. Let $A_k$ denote the set of all active nodes at time $k$. Since we are dealing with the progressive case, it is clear that $A_0 \subset A_1 \subset \ldots \subseteq \mathcal{N}$. Let $D_k$ denote the set of nodes which were activated at time $k$, i.e., $D_k = A_k \backslash A_{k-1}$ and $D_0=\mathcal{A}_0$. Let $S$ denote the random stopping time at which the activation process stops, i.e., $S=min_k\{A_k = A_{k-1}\}$. Then we can define $\sigma^{(\mathcal{N},\mathcal{A}_0)} = \mathbb{E}^{(\mathcal{N},\mathcal{A}_0)}[|A_S|]$ to be the expected size of the \emph{terminal set} $A_S$, starting with $\mathcal{A}_0$ as the initial set in the network $\mathcal{N}$.
The influence maximization problem can then be formulated as follows:

\begin{equation}
\max  \sigma^{(\mathcal{N},\mathcal{A}_0)}
\label{eq-obj-fn}
\end{equation}
\[\text{s.t.} \ \mathcal{A}_0 \subset \mathcal{N} \]
\[ |\mathcal{A}_0|= K \]

\paragraph{Greedy Algorithm}
The greedy hill climbing solution for the influence maximization problem is shown in Algorithm~\ref{algo-greedy}. In the algorithm, $K$ is the size of the required initial set, and the set $X$ obtained after the $K$ iterations is the greedy solution. It is noted in \cite{kempe-etal03max-spread-infl} that this achieves an approximation factor of $(1-1/e)$, and the proof involves the submodularity and monotonicity of $\sigma^{(\mathcal{N},\mathcal{A})}$.

\begin{algorithm}
$X \leftarrow \emptyset$\;
\For {$i=1$ to $K$}
{
Choose $v_i$ such that $v_i =\arg \max_v \sigma^{(\mathcal{N}, X \cup v)}$\;
$ X \leftarrow X \cup v_i$\;
}
\caption{Greedy Algorithm}
\label{algo-greedy}
\end{algorithm}  

\section{ Recursive Expression for \large{$\sigma^{(\mathcal{N},\mathcal{A}_0)}$}}
 \label{recur-exp}
 As far as we know, there is no work on mathematically characterising the value of $\sigma^{(\mathcal{N},\mathcal{A}_0)}$ for the models introduced in \cite{kempe-etal03max-spread-infl}. Moreover, $\sigma^{(\mathcal{N},\mathcal{A}_0)}$ is generally obtained by simulating the activation process several times on the social network, and taking the average value. In this section, we derive an expression for $\sigma^{(\mathcal{N},i)}$ in recursive form, and hence give a general expression for $\sigma^{(\mathcal{N},\mathcal{A}_0)}$. We use this expression later to provide insights into various existing heuristics, and also for proposing an efficient algorithm that matches the greedy solution.
Let us begin with the definition of $\sigma^{(\mathcal{N},\mathcal{A}_0)}$.
\[ \sigma^{(\mathcal{N},\mathcal{A}_0)} = \mathbb{E}^{(\mathcal{N},\mathcal{A}_0)}[|A_S|] \]
Note that, since $D_{k}$'s are disjoint, and $\bigcup_{k=0}^{\infty} D_{k} = A_S$, we can write,
\begin{eqnarray*}
\sigma^{(\mathcal{N},\mathcal{A}_0)} &=& \sum_{k=0}^{|\mathcal{N}|}\mathbb{E}^{(\mathcal{N},\mathcal{A}_0)}[|D_k|]\\
&=&\sum_{k=0}^{|\mathcal{N}|}\mathbb{E}^{(\mathcal{N},\mathcal{A}_0)} [ \sum_{j\in \mathcal{N}} I_{\{j \in D_k\}} ]\\
&=&\sum_{k=0}^{|\mathcal{N}|}\sum_{j\in \mathcal{N}} \mathbb{E}^{(\mathcal{N},\mathcal{A}_0)} [ I_{\{j \in D_k\}}]\\
&=&\sum_{k=0}^{|\mathcal{N}|}\sum_{j\in \mathcal{N}} g_j^{(\mathcal{N},\mathcal{A}_0)} (k)\\						
\end{eqnarray*}						 
In the above expressions, $I_{\{E\}}$ denotes the indicator variable for the event $E$, and we also use the fact that the total number of time steps of the activation process is bounded above by the number of nodes in the network, $|\mathcal{N}|$. $g_j^{(\mathcal{N},\mathcal{A}_0)} (k)$ gives the probability that node $j$ is activated at the time step $k$, given that we start with $\mathcal{A}_0$ as the initial set in the network $\mathcal{N}$. We wish to state the following lemma, which will help us determine $g_j^{(\mathcal{N},\mathcal{A}_0)} (k)$.

\begin{lemma}\label{lemma-g}
\begin{enumerate}
\item $j \in \mathcal{A}_0$,
      \begin{enumerate}
      \item $g_j^{(\mathcal{N},\mathcal{A}_0)} (0) =1$
      \item $g_j^{(\mathcal{N},\mathcal{A}_0)} (k) =0$ , for all $k > 0$
	  \end{enumerate}
	  
\hspace{1cm}\item $j \notin \mathcal{A}_0$,	  
      \begin{enumerate}
	  \item $g_j^{(\mathcal{N},\mathcal{A}_0)} (0) =0$
	  \item $g_j^{(\mathcal{N},\mathcal{A}_0)} (k) = \displaystyle{\sum_{l \in \mathcal{N} \backslash \{j\}}} g_l^{(\mathcal{N}\backslash \{j\},\mathcal{A}_0)} (k-1)\ w_{l,j}$            , for all $k > 0$
      \end{enumerate}
\end{enumerate}
\end{lemma}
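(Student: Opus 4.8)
The plan is to dispatch the easy cases directly from the definitions and then reduce part~2(b) to a coupling between the dynamics on $\mathcal{N}$ and on $\mathcal{N}\backslash\{j\}$. Parts 1(a), 1(b) and 2(a) are immediate: $D_0=\mathcal{A}_0$ and the $D_k$ are pairwise disjoint, so a node $j\in\mathcal{A}_0$ lies in $D_0$ and in no other $D_k$ (giving $g_j^{(\mathcal{N},\mathcal{A}_0)}(0)=1$ and $g_j^{(\mathcal{N},\mathcal{A}_0)}(k)=0$ for $k>0$), while a node $j\notin\mathcal{A}_0$ lies in no $D_0$ (giving $g_j^{(\mathcal{N},\mathcal{A}_0)}(0)=0$).

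For part~2(b), fix $j\notin\mathcal{A}_0$ and run the Linear Threshold dynamics on $\mathcal{N}$ and on $\mathcal{N}\backslash\{j\}$ using a common realisation of the thresholds $\{\Theta_i : i\neq j\}$; write $A_k$, $A'_k$ for the respective active sets, with $A'_{-1}:=\emptyset$. The first step is an induction on $k$ showing that as long as $j\notin A_{k-1}$ one has $A_k\backslash\{j\}=A'_k$, because for any node $i\neq j$ the test $\sum_{l\in A_{k-1}} w_{l,i}\ge\Theta_i$ yields the same sum in $\mathcal{N}$ and in $\mathcal{N}\backslash\{j\}$ once $j$ contributes nothing. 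From this I would deduce that the event $\{j\in D_k\}$ --- $j$ inactive through step $k-1$ and activated at step $k$ --- equals $\{\, b_j(A'_{k-2}) < \Theta_j \le b_j(A'_{k-1})\,\}$: one inclusion reads off the threshold conditions and uses monotonicity of $A'_m$ in $m$, the other is a minimal-counterexample argument ruling out an earlier activation of $j$. Since $A'$ depends only on $\{\Theta_i : i\neq j\}$ it is independent of $\Theta_j\sim U[0,1]$, and the normalisation $\sum_{i\neq j} w_{i,j}\le 1$ ensures $0\le b_j(A'_{k-2})\le b_j(A'_{k-1})\le 1$, so conditioning on $A'$ gives
\[
g_j^{(\mathcal{N},\mathcal{A}_0)}(k) = \mathbb{E}\big[\, b_j(A'_{k-1}) - b_j(A'_{k-2}) \,\big] = \mathbb{E}\big[\, b_j(D'_{k-1}) \,\big] = \sum_{l\in\mathcal{N}\backslash\{j\}} w_{l,j}\,\mathbb{P}(l\in D'_{k-1}),
\]
with $D'_{k-1}=A'_{k-1}\backslash A'_{k-2}$; since $\mathbb{P}(l\in D'_{k-1}) = g_l^{(\mathcal{N}\backslash\{j\},\mathcal{A}_0)}(k-1)$ this is exactly the claimed recursion (the case $k=1$ collapsing to $g_j^{(\mathcal{N},\mathcal{A}_0)}(1)=b_j(\mathcal{A}_0)$, consistent with $D'_0=\mathcal{A}_0$).

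I expect the main obstacle to be the identification step, i.e.\ showing rigorously that ``$j$ is activated exactly at step $k$ in $\mathcal{N}$'' can be rephrased purely in terms of the $\mathcal{N}\backslash\{j\}$ dynamics and the single threshold $\Theta_j$. The delicate point is that the two coupled processes genuinely diverge as soon as $j$ becomes active, so the coupling identity $A_m\backslash\{j\}=A'_m$ may only be invoked for time indices $m\le k-1$, where $j$ is still inactive; everything else --- disjointness of the $D_k$, linearity of expectation, and the uniform-threshold probability computation --- is routine.
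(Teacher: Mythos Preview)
Your proposal is correct and follows essentially the same route as the paper: both reduce $\{j\in D_k\}$ to the threshold condition $b_j(A'_{k-2})<\Theta_j\le b_j(A'_{k-1})$ with $A'$ the dynamics on $\mathcal{N}\backslash\{j\}$, then use independence of $\Theta_j$ and linearity to obtain the recursion. The paper is terser---it simply asserts the displayed identity $g_j^{(\mathcal{N},\mathcal{A}_0)}(k)=\mathbb{P}^{(\mathcal{N}\backslash\{j\},\mathcal{A}_0)}\big(b_j(A_{k-2})<\Theta_j\le b_j(A_{k-1})\big)$ without justifying the passage from $\mathcal{N}$ to $\mathcal{N}\backslash\{j\}$---whereas you make this rigorous via the explicit coupling and the induction $A_m\backslash\{j\}=A'_m$ for $m\le k-1$, which is exactly the step the paper leaves implicit.
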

\begin{proof}
Note that 1(a) and 2(a) are obvious, since $D_0=\mathcal{A}_0$, chosen deterministically. 1(b) follows from 1(a) and the observation that
\( \sum_{k=0}^{\infty} g_{j}^{(\mathcal{N},\mathcal{A}_0)} (k) \leq 1 \) by definition.
For 2(b), since $\mathcal{A}_0 \subset \mathcal{N} \backslash \{j\}$,
\[ g_{j}^{(\mathcal{N},\mathcal{A}_0)} (k) = \mathbb{P}^{(\mathcal{N} \backslash \{j\}, \mathcal{A}_0)} \bigg( b_{j}(A_{k-2}) < \Theta_{j} \leq b_{j}(A_{k-1}) \bigg) \]
Since $D_{k-1}=A_{k-1} \backslash A_{k-2}$, and $\Theta_{j}$ is chosen uniformly from [0,1], we can write,

\begin{eqnarray*}
 g_{j}^{(\mathcal{N},\mathcal{A}_0)} (k) &=& \mathbb{E}^{(\mathcal{N} \backslash \{j\}, \mathcal{A}_0)} [ b_{j}(D_{k-1})]\\
 &=&\mathbb{E}^{(\mathcal{N} \backslash \{j\}, \mathcal{A}_0)} \bigg[ \sum_{l \in \mathcal{N}\backslash\{j\}} I_{\{l \in D_{k-1}\}} w_{l,j} \bigg]\\
 &=&\sum_{l \in \mathcal{N} \backslash \{j\}} g_l^{(\mathcal{N}\backslash \{j\},\mathcal{A}_0)} (k-1)\  w_{l,j}\\
\end{eqnarray*}
\end{proof}

\subsection{Singleton Initial Set}
Now, by Lemma~\ref{lemma-g}, we can write,
For $j \neq i$,
\begin{eqnarray*}
g_{j}^{(\mathcal{N}, i)}(1) &=& w_{i,j}\\
\end{eqnarray*}
\begin{eqnarray*}
g_{j}^{(\mathcal{N}, i)}(2) &=& \sum_{k_1 \in \mathcal{N} \backslash \{j\}} g_{k_1}^{(\mathcal{N}\backslash \{j\},i)}(1) w_{k_1,j}\\
							&=& \sum_{k_1 \in \mathcal{N} \backslash \{i,j\}} w_{i,k_1} w_{k_1,j}\\
\end{eqnarray*}
\begin{eqnarray*}
\lefteqn{g_{j}^{(\mathcal{N}, i)}(3)}\\
&=& \sum_{k_1 \in \mathcal{N} \backslash \{j\}} g_{k_1}^{(\mathcal{N}\backslash \{j\},i)}(2) w_{k_1,j}\\
                            &=& \sum_{k_1 \in \mathcal{N} \backslash \{i,j\}} g_{k_1}^{(\mathcal{N}\backslash \{j\},i)}(2) w_{k_1,j}\\
							&=& \sum_{k_1 \in \mathcal{N} \backslash \{i,j\}} \sum_{k_2 \in \mathcal{N} \backslash \{j,k_1\}} g_{k_2}^{(\mathcal{N}\backslash \{j,k_1\},i)}(1) w_{k_2,k_1} w_{k_1,j}\\
							&=& \sum_{k_1 \in \mathcal{N} \backslash \{i,j\}} \sum_{k_2 \in \mathcal{N} \backslash \{i,j,k_1\}} w_{i,k_2} w_{k_2,k_1} w_{k_1,j}\\
							&=& \sum_{k_1 \in \mathcal{N} \backslash \{i,j\}} \sum_{k_2 \in \mathcal{N} \backslash \{i,j,k_1\}} w_{i,k_1} w_{k_1,k_2} w_{k_2,j}\\
\end{eqnarray*}
In the each of the steps, we substitute the expression for  $g_{j}^{(\mathcal{N},i)} (k)$ and noting that $g_{i}^{(\mathcal{N},i)} (k) =0$ for $k>0$. The last step is obtained by suitably rearranging the terms.

Note that, the above terms can be understood, as the influence of node $i$ reaching node $j$ through a path (without loops) of $k$ hops. We can use this to derive the recursive equation for $\sigma^{(\mathcal{N},i)}$.We have,

\begin{eqnarray*}
\lefteqn{\sigma^{(\mathcal{N},i)}}\\ 
&=& \sum_{k=0}^{\infty} \sum_{j \in \mathcal{N}} g_j^{(\mathcal{N},i)} (k)\\
&=&1 + \sum_{k=1}^{\infty} \sum_{j \in \mathcal{N}} g_j^{(\mathcal{N},i)} (k) \\
&=& 1 + \sum_{j \in \mathcal{N} \backslash \{i\}} g_j^{(\mathcal{N},i)} (1) + \sum_{j \in \mathcal{N} \backslash \{i\}} g_j^{(\mathcal{N},i)} (2) + \cdots \\
&=& 1 + \sum_{j \in \mathcal{N} \backslash \{i\}} w_{i,j} + \sum_{j \in \mathcal{N} \backslash \{i\}}  \sum_{k_1 \in \mathcal{N} \backslash \{i,j\}} w_{i,k_1} w_{k_1,j} + \cdots \\
\end{eqnarray*}
By changing variables and rearranging summations, this is equivalent to
\begin{eqnarray*}
\sigma^{(\mathcal{N},i)} &=& 1 + \sum_{k_1 \in \mathcal{N} \backslash \{i\}} w_{i,k_1} +\\
& & \sum_{k_1 \in \mathcal{N} \backslash \{i\}} w_{i,k_1} \sum_{j \in \mathcal{N} \backslash \{i,k_1\}} w_{k_1,j} + \cdots\\
&=& 1 + \sum_{k_1 \in \mathcal{N} \backslash \{i\}} w_{i,k_1} \bigg[ 1 + \sum_{k_2 \in \mathcal{N} \backslash \{i,k_1\}} w_{k_1,k_2} \bigg[ 1 + \cdots\\
\end{eqnarray*}
Note that this equation is recursive in nature, and hence we can state the following theorem.

\begin{theorem}
Given a social network $\mathcal{N}$, with influence matrix $\mathbf{W}$, the total influence of any node $i$ in the network under the LT model is given by 
\begin{equation}
\sigma^{(\mathcal{N},i)} = 1 + \sum_{j \in \mathcal{N} \backslash \{i\}} w_{i,j} \sigma^{(\mathcal{N} \backslash \{i\},j)}
\label{eq-sigma-i}
\end{equation}
\end{theorem}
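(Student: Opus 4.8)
The plan is to combine Lemma~\ref{lemma-g} with the self-avoiding path interpretation that already emerges from the computations of $g_j^{(\mathcal{N},i)}(1), g_j^{(\mathcal{N},i)}(2), g_j^{(\mathcal{N},i)}(3)$ displayed above, and then to reorganize the resulting sum by its \emph{first} hop out of $i$ rather than by its last hop into $j$.

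First I would make the path interpretation precise. By induction on $k$, starting from $g_j^{(\mathcal{N},i)}(1)=w_{i,j}$ and using the recursion of Lemma~\ref{lemma-g}(2b), one shows that for $j \neq i$,
\[ g_j^{(\mathcal{N},i)}(k) = \sum_{(i=v_0,v_1,\dots,v_k=j)} \ \prod_{t=0}^{k-1} w_{v_t,v_{t+1}}, \]
where the sum is over all loopless directed paths of length exactly $k$ from $i$ to $j$ in $\mathcal{N}$; the induction step goes through because deleting the target node before re-applying the recursion is exactly what forbids revisiting it. Since a loopless path has at most $|\mathcal{N}|$ vertices, every sum here is finite and convergence is not an issue. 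Plugging this into $\sigma^{(\mathcal{N},i)}=\sum_{k\geq 0}\sum_{j\in\mathcal{N}} g_j^{(\mathcal{N},i)}(k)$ and noting that the $k=0$ layer contributes exactly $1$ (only $g_i^{(\mathcal{N},i)}(0)=1$ is nonzero there), we obtain $\sigma^{(\mathcal{N},i)} = 1 + \Sigma_i$, where $\Sigma_i$ denotes the sum of the weight products over all \emph{nonempty} loopless directed paths in $\mathcal{N}$ starting at $i$.

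Next I would set up the bijection that produces the recursion. Every nonempty loopless path from $i$ has the form $i\to j\to v_2\to\cdots\to v_k$ for a unique $j\in\mathcal{N}\setminus\{i\}$; because the whole path is loopless it never returns to $i$, so its tail $j\to v_2\to\cdots\to v_k$ is a loopless path from $j$ inside $\mathcal{N}\setminus\{i\}$ (the empty path when $k=1$), and conversely prepending the edge $i\to j$ to any loopless path from $j$ in $\mathcal{N}\setminus\{i\}$ gives a nonempty loopless path from $i$ in $\mathcal{N}$. This correspondence multiplies weights by the factor $w_{i,j}$, so
\[ \Sigma_i = \sum_{j\in\mathcal{N}\setminus\{i\}} w_{i,j}\,\big(1+\Sigma'_j\big), \]
where $\Sigma'_j$ is the sum of weight products over nonempty loopless paths from $j$ in $\mathcal{N}\setminus\{i\}$. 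Applying the same path identity now in the network $\mathcal{N}\setminus\{i\}$ gives $1+\Sigma'_j=\sigma^{(\mathcal{N}\setminus\{i\},j)}$, and substituting yields $\sigma^{(\mathcal{N},i)} = 1 + \sum_{j\in\mathcal{N}\setminus\{i\}} w_{i,j}\,\sigma^{(\mathcal{N}\setminus\{i\},j)}$.

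I expect the bookkeeping in the first step to be the main obstacle: one has to check that the recursion of Lemma~\ref{lemma-g}(2b) --- which strips the last edge and deletes the endpoint --- enumerates each loopless $i$-to-$j$ path exactly once, and that iterating such deletions never drops a legitimate loopless path. A tidier alternative that avoids explicit paths is the ``first-step'' analogue of Lemma~\ref{lemma-g}, namely $g_l^{(\mathcal{N},i)}(k)=\sum_{j\in\mathcal{N}\setminus\{i\}} w_{i,j}\,g_l^{(\mathcal{N}\setminus\{i\},j)}(k-1)$ for $l\neq i$, $k\geq 1$, proved by the same rearrangement of summations already used above to rewrite $\sigma^{(\mathcal{N},i)}$; summing this identity over $l\in\mathcal{N}\setminus\{i\}$ and $k\geq 1$ and adding the $k=0$ term immediately gives the theorem. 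I would write out whichever of the two derivations is shorter to make fully rigorous.
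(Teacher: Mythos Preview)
Your proposal is correct and follows essentially the same route as the paper: derive the acyclic-path expansion of $g_j^{(\mathcal{N},i)}(k)$ from Lemma~\ref{lemma-g}, sum over $k$ and $j$, and then regroup by the first hop out of $i$ to obtain the recursion. The paper carries out the first step by explicit computation for $k=1,2,3$ and then writes ``by changing variables and rearranging summations'' for the regrouping; your path-bijection argument (and your alternative first-step identity $g_l^{(\mathcal{N},i)}(k)=\sum_{j\in\mathcal{N}\setminus\{i\}} w_{i,j}\,g_l^{(\mathcal{N}\setminus\{i\},j)}(k-1)$) are precisely the rigorous versions of that rearrangement, so either one is fine to write out.
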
 

The equation says that under the linear threshold model, the total influence of any node $i$ in the network, is one (for the node $i$ itself) plus the weighted sum of the influences of its neighbours in the network without $i$.

\subsection{Initial Set $\mathcal{A}_0$}
A similar derivation can be done for any $\mathcal{A}_0$. In this case, again using Lemma 1, we get

\begin{eqnarray*}
g_{j}^{(\mathcal{N}, \mathcal{A}_0)}(1) &=& \sum_{i \in \mathcal{A}_0} w_{i,j}\\
g_{j}^{(\mathcal{N}, \mathcal{A}_0)}(2) &=& \sum_{i \in \mathcal{A}_0} \sum_{k_1 \neq j \  k_1 \notin \mathcal{A}_0} w_{i,k_1} w_{k_1,j}\\
g_{j}^{(\mathcal{N}, \mathcal{A}_0)}(3) &=& \sum_{i \in \mathcal{A}_0} \sum_{k_1 \neq j \  k_1 \notin \mathcal{A}_0} \sum_{k_2 \neq j,k_1 \  k_2 \notin \mathcal{A}_0} w_{i,k_1} w_{k_1,k_2} w_{k_2,j}\\
\end{eqnarray*}
and so on. Note that these terms can be understood, as the influence of nodes $i\in \mathcal{A}_0$ reaching node $j$ through a path (without loops) of $k$ steps, \emph{without passing through any other node in $\mathcal{A}_0$}.

Also, having chosen $\mathcal{A}_0$, the edge weights $ \{ w_{i,j} , j \in \mathcal{A}_0 \}$ do not have any effect on $\sigma^{(\mathcal{N},\mathcal{A}_0)}$. By the above two observations, we can thus divide the problem of finding $\sigma^{(\mathcal{N},\mathcal{A}_0)}$ into $K$ subproblems, where $K=|\mathcal{A}_0|$. 
Define sub-networks $\mathcal{N}_{i}^{\mathcal{A}_0}$, for all $i \in \mathcal{A}_0$, such that,

\[\mathcal{N}_{i}^{\mathcal{A}_0} = \{ \mathcal{N} \backslash \mathcal{A}_0 \} \cup \{i\} \]
Then we can see that, 
\[g_{j}^{(\mathcal{N}, \mathcal{A}_0)}(k) = \sum_{i \in \mathcal{A}_0} g_{j}^{(\mathcal{N}_{i}^{\mathcal{A}_0}, i)}(k) \]
Now we can state the theorem as follows.

\begin{theorem}
Given a social network $\mathcal{N}$, with influence matrix $\mathbf{W}$, the total influence of any initial set $\mathcal{A}_0$ in the network under the LT model is given by 

\begin{equation}
\sigma^{(\mathcal{N},\mathcal{A}_0)} = \sum_{i \in \mathcal{A}_0} \sigma^{(\mathcal{N}_{i}^{\mathcal{A}_0},i)}
\label{eq-sigma-a0}
\end{equation}
\end{theorem}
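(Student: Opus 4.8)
The plan is to deduce the theorem from the additive decomposition of the step‑activation probabilities asserted just above its statement, namely
\[
g_{j}^{(\mathcal{N},\mathcal{A}_0)}(k)=\sum_{i\in\mathcal{A}_0}g_{j}^{(\mathcal{N}_{i}^{\mathcal{A}_0},i)}(k),
\]
combined with the series representation $\sigma^{(\mathcal{N},\mathcal{A}_0)}=\sum_{k\ge 0}\sum_{j\in\mathcal{N}}g_{j}^{(\mathcal{N},\mathcal{A}_0)}(k)$ already derived at the start of Section~\ref{recur-exp}. Granting the decomposition, the theorem follows by interchanging the (finite, since $k\le|\mathcal{N}|$) sums over $k$, $j$ and $i$, and by noting that $g_{j}^{(\mathcal{N}_{i}^{\mathcal{A}_0},i)}(k)=0$ whenever $j\notin\mathcal{N}_{i}^{\mathcal{A}_0}$ (a node outside a network is never activated in it), so that $\sum_{j\in\mathcal{N}}g_{j}^{(\mathcal{N}_{i}^{\mathcal{A}_0},i)}(k)=\sum_{j\in\mathcal{N}_{i}^{\mathcal{A}_0}}g_{j}^{(\mathcal{N}_{i}^{\mathcal{A}_0},i)}(k)$, which sums over $k$ to $\sigma^{(\mathcal{N}_{i}^{\mathcal{A}_0},i)}$. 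Hence the real work is establishing the displayed decomposition of $g_j$.

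I would prove that decomposition by induction on $k$, with the induction hypothesis taken to hold for \emph{every} induced subgraph of $\mathcal{N}$ that still contains $\mathcal{A}_0$ (equivalently, a downward induction on $|\mathcal{N}|$ nested inside the induction on $k$), since the recursion in Lemma~\ref{lemma-g} lowers $k$ while passing to the smaller network $\mathcal{N}\backslash\{j\}$. For $k=0$ both sides simply record whether $j\in\mathcal{A}_0$ (on the right only the term $i=j$ survives, and equals $1$). For $j\in\mathcal{A}_0$ and $k\ge1$ both sides are $0$ (left by Lemma~\ref{lemma-g} part 1(b); right because each term has either $i=j$, giving $0$ by part 1(b), or $j\notin\mathcal{N}_{i}^{\mathcal{A}_0}$). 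The substantive case is $j\notin\mathcal{A}_0$, $k\ge1$: apply part 2(b) of Lemma~\ref{lemma-g}; the terms with $l\in\mathcal{A}_0\backslash\{j\}$ drop out because $g_{l}^{(\mathcal{N}\backslash\{j\},\mathcal{A}_0)}(k-1)=0$ for $k-1\ge0$ (trivially for $k=1$ since then $l\notin\mathcal{N}\backslash\{j\}$ would be needed only when... here use part 1(b) for $k\ge2$ and direct inspection for $k=1$); apply the induction hypothesis in $\mathcal{N}\backslash\{j\}$ to each surviving $g_{l}^{(\mathcal{N}\backslash\{j\},\mathcal{A}_0)}(k-1)$; swap the $l$-sum and the $i$-sum; and finally recognise, using part 2(b) inside each $\mathcal{N}_{i}^{\mathcal{A}_0}$, the inner sum as $g_{j}^{(\mathcal{N}_{i}^{\mathcal{A}_0},i)}(k)$.

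The step that needs care — and the main obstacle — is the bookkeeping in this inductive step: one must check that deleting $j$ commutes with forming the sub‑network, i.e.\ $(\mathcal{N}\backslash\{j\})_{i}^{\mathcal{A}_0}=\mathcal{N}_{i}^{\mathcal{A}_0}\backslash\{j\}$ (valid because $j\notin\mathcal{A}_0$), that the weights $w_{l,j}$ appearing in part 2(b) are the ones inherited unchanged by $\mathcal{N}_{i}^{\mathcal{A}_0}$ for $l\in\mathcal{N}_{i}^{\mathcal{A}_0}$, and that after discarding the $l\in\mathcal{A}_0$ terms the remaining $l$-sum over $\mathcal{N}\backslash(\mathcal{A}_0\cup\{j\})$ matches exactly the $l$-sum over $\mathcal{N}_{i}^{\mathcal{A}_0}\backslash\{i,j\}$ that rebuilds $g_{j}^{(\mathcal{N}_{i}^{\mathcal{A}_0},i)}(k)$. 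Once these identifications are in place the computation is routine, and the final assembly of $\sigma^{(\mathcal{N},\mathcal{A}_0)}$ is just the interchange of finite sums described above. As a consistency check, the explicit path‑sum formulas displayed in the text for $g_{j}^{(\mathcal{N},\mathcal{A}_0)}(1),(2),(3)$ already exhibit precisely this structure: each is a sum over $i\in\mathcal{A}_0$ of weights of acyclic $i\to j$ walks that avoid $\mathcal{A}_0\backslash\{i\}$, i.e.\ of walks living in $\mathcal{N}_{i}^{\mathcal{A}_0}$, which is the combinatorial content of the claim.
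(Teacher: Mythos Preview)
Your proposal is correct and follows essentially the same route as the paper: reduce the theorem to the additive decomposition $g_{j}^{(\mathcal{N},\mathcal{A}_0)}(k)=\sum_{i\in\mathcal{A}_0}g_{j}^{(\mathcal{N}_{i}^{\mathcal{A}_0},i)}(k)$ and then sum over $k$ and $j$. The paper simply asserts this decomposition after displaying the explicit path-sum formulas for $k=1,2,3$ (``we can see that\ldots''), whereas you supply the missing inductive argument via Lemma~\ref{lemma-g}; this is a rigorous version of the same idea rather than a different approach. One small point: your sentence about the $l\in\mathcal{A}_0$ terms ``dropping out'' is only valid for $k\ge 2$; at $k=1$ those terms are precisely the ones that survive (giving $\sum_{i\in\mathcal{A}_0}w_{i,j}$), so your ``direct inspection for $k=1$'' is indeed needed as a separate base case rather than as a degenerate instance of the drop-out claim.
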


Each of the terms in the right hand side, can then be evaluated recursively using Equation~\ref{eq-sigma-i}.
 
\section{Interpretation via Acyclic Path Probabilities in a DTMC}
\label{DTMC-interpret}
To begin with, since $\sum_{i \neq j} w_{i,j} \leq 1$, $\mathbf{W}$ in general need not be a stochastic matrix. In order to interpret the expressions for $g_{j}^{(\mathcal{N},\mathcal{A}_0)}$ in the Discrete Time Markov chains(DTMC) framework, we require $\mathbf{P}=\mathbf{W}^T$ to be row stochastic. Hence, we can set $w_{j,j} = 1 - \sum_{i \neq j} w_{i,j}$. Note that this does not affect the theory developed till now, since terms of the form $w_{i,i}$ do not feature in Equations~\ref{eq-sigma-i} and \ref{eq-sigma-a0}. We shall also adopt a similar approach when we use the PageRank algorithm, where we will be calculating the stationary probability of a DTMC. 

From Lemma~\ref{lemma-g}, for all $j \neq i$,

\begin{eqnarray*}
\lefteqn{g_{j}^{(\mathcal{N},i)}(k)}\\
&=& \sum_{l_1 \neq i,j} \sum_{l_2 \neq i,j,l_1} \ldots \sum_{l_{k-1} \neq i,j,l_1,l_2,\ldots l_{k-2}} w_{i,l_1} w_{l_1,l_2} \ldots w_{l_{k-1},j} \end{eqnarray*}
 
Writing $\mathbf{P}=\mathbf{W}^T$ and interpreting $\mathbf{P}$ as a transition probability matrix for the DTMC $\{X_k\}$, obtained by reversing the edges in the social network, we have,
 
\begin{eqnarray*}
\lefteqn{g_{j}^{(\mathcal{N},i)}(k)}\\
&=& \sum_{l_1 \neq i,j} \sum_{l_2 \neq i,j,l_1} \ldots \sum_{l_{k-1} \neq i,j,l_1,l_2,\ldots l_{k-2}} p_{j,l_1} p_{l_1,l_2} \ldots p_{l_{k-1},i}\\
&=& \mathbb{P}(\{X_m \in \mathcal{N}\backslash \{i\}, 0 \leq m < k, X_k = i,\\
& &\hspace{3cm} X_0 \neq X_1 \neq \cdots \neq X_k \}\big| X_0=j)\\
&=:& c_k(j\rightarrow i)\\
\end{eqnarray*}
and similarly,

\begin{eqnarray*}
\lefteqn{g_{j}^{(\mathcal{N},\mathcal{A}_0)}(k)}\\
&=& \mathbb{P}(\{X_m \in \mathcal{N}\backslash \mathcal{A}_0, 0 \leq m < k, X_k \in \mathcal{A}_0,\\
& &\hspace{3cm} X_0 \neq X_1 \neq \cdots \neq X_k \} \big| X_0=j)\\
&=:& c_k(j\rightarrow \mathcal{A}_0)\\
\end{eqnarray*}
Define,
\[c(j \rightarrow i) = \sum_{k=0}^{\infty} c_k(j\rightarrow i)\]
\[c(j \rightarrow \mathcal{A}_0) = \sum_{k=0}^{\infty} c_k(j\rightarrow \mathcal{A}_0)\]
In the DTMC represented by $P$, given we start from state $j$, $c(j \rightarrow i)$ denotes the probability of hitting state $i$ for the first time through an acyclic path from $j$, and $c(j \rightarrow \mathcal{A}_0)$ denotes the probability of hitting the set $\mathcal{A}_0$ for the first time through an acyclic path from $j$. 
Since $g_{j}^{(\mathcal{N},\mathcal{A}_0)} = c(j \rightarrow \mathcal{A}_0)$ we have,

\begin{equation}
\sigma^{(\mathcal{N},\mathcal{A}_0)} = |\mathcal{A}_0| + \sum_{j \notin \mathcal{A}_0} c(j \rightarrow \mathcal{A}_0) 
\label{eq-sigma-in-c}
\end{equation}

Now the influence maximization problem can be restated as follows:\newline

\textit{ Let $\mathbf{W}$ denote the influence matrix for the given problem. For the DTMC over the finite state space $\mathcal{N}$, with transition probability matrix $\mathbf{P}=\mathbf{W}^T$, choose $\mathcal{A} \subset \mathcal{N}, |\mathcal{A}|=K$, such that $\sum_{j \in \mathcal{A}^c} c(j \rightarrow \mathcal{A})$ is maximized. The set $\mathcal{A}$ thus obtained is the solution to the original influence maximization problem.}

\subsection{Properties of Acyclic path probabilities}
\label{prop-app}

We shall use the interpretation in terms of acyclic path probabilities in the DTMC, to provide an alternate proof of submodularity of $\sigma^{(\mathcal{N}, \mathcal{A}_0)}$ in the next section. In this subsection, we state and prove a few properties of such acyclic path probabilities. Let us first introduce a more elaborate notation.

\begin{eqnarray}
\label{eq-defn}
\lefteqn{c^\mathcal{W} (j \stackrel{v}{\rightarrow} \mathcal{D})}\nonumber\\
&=&\mathbb{P}\bigg(\bigcup_{k=0}^{\infty}  \{X_m \in \mathcal{W}\backslash \mathcal{D}, 0 \leq m < k, X_k \in \mathcal{D},\nonumber\\
& &\hspace*{1cm} X_0 \neq X_1 \neq \cdots \neq X_k,\nonumber\\
& &\hspace*{1cm} \exists\ \ 0\leq u < k, s.t.\  X_u = v \} \big| X_0 = j\bigg)\nonumber\\ 
&=&\sum_{k=0}^{\infty}\mathbb{P}\bigg( \{X_m \in \mathcal{W} \backslash \mathcal{D}, 0 \leq m < k, X_k \in \mathcal{D},\nonumber\\
& &\hspace*{1cm} X_0 \neq X_1 \neq \cdots \neq X_k,\nonumber\\
& &\hspace*{1cm} \exists\ \ 0\leq u < k, s.t.\  X_u = v \} \big| X_0 = j\bigg) 
\end{eqnarray}
Here $c^\mathcal{W} (j \stackrel{v}{\rightarrow} \mathcal{D})$ is the probability in the DTMC of reaching the set $\mathcal{D}$ for the first time, through an acyclic path via node $v$, using nodes only from $\mathcal{W}$ as intermediate nodes, given that we start from node $j$. If $\mathcal{W}=\mathcal{N}$, we do not explicitly mention it in the notation. Also, $v$ is an optional argument, which when omitted, removes the constraint that $X_u=v$ for some $0\leq u \leq k$. In all useful cases that we consider, we assume $v,j \notin \mathcal{D}$ and also $v,j \in \mathcal{W}$.

Some properties of Acyclic path probabilities are as follows:

\begin{lemma}\label{lemma-jinA-1}
$c( j \rightarrow \mathcal{A} ) = 1$, for all $j \in \mathcal{A}$
\end{lemma}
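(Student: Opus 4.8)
The plan is to evaluate the series $c(j \rightarrow \mathcal{A}) = \sum_{k=0}^{\infty} c_k(j \rightarrow \mathcal{A})$ term by term, directly from the definition of $c_k(j\rightarrow\mathcal{A})$, under the hypothesis $j \in \mathcal{A}$. The key observation is that when the starting state already lies in the target set, only the $k=0$ term can be nonzero.

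First I would treat the $k=0$ term. In the defining event for $c_0(j \rightarrow \mathcal{A})$, the condition ``$X_m \in \mathcal{N}\backslash\mathcal{A}$ for $0 \leq m < 0$'' ranges over the empty index set and is therefore vacuously satisfied, and likewise the distinctness chain $X_0 \neq \cdots \neq X_0$ is vacuous; what remains is exactly the event $\{X_0 \in \mathcal{A}\}$. Conditioned on $X_0 = j$ with $j \in \mathcal{A}$, this event has probability $1$, so $c_0(j \rightarrow \mathcal{A}) = 1$.

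Next I would show every higher-order term vanishes. For $k \geq 1$, the defining event for $c_k(j \rightarrow \mathcal{A})$ includes (taking $m=0$, which satisfies $0 \leq m < k$) the requirement $X_0 \in \mathcal{N}\backslash\mathcal{A}$; but conditioning on $X_0 = j \in \mathcal{A}$ makes this impossible, so $c_k(j \rightarrow \mathcal{A}) = 0$. Summing over $k$ yields $c(j \rightarrow \mathcal{A}) = 1$, as claimed. (Equivalently, one may invoke Lemma~\ref{lemma-g}: since $g_j^{(\mathcal{N},\mathcal{A})} = c(j \rightarrow \mathcal{A})$ and, for $j \in \mathcal{A}$, parts 1(a)--1(b) give $g_j^{(\mathcal{N},\mathcal{A})}(0) = 1$ and $g_j^{(\mathcal{N},\mathcal{A})}(k) = 0$ for all $k > 0$, the total sums to $1$.)

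There is essentially no obstacle here; the only point requiring care is the degenerate $k=0$ case, i.e.\ interpreting the intermediate-node and distinctness constraints over an empty index range as vacuously true, so that $c_0(j \rightarrow \mathcal{A}) = 1$ rather than $0$. Intuitively the statement is immediate: a node already in $\mathcal{A}$ ``reaches'' $\mathcal{A}$ instantly via the trivial length-zero acyclic path, with certainty.
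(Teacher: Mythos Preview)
Your proof is correct and follows essentially the same approach as the paper's: separate the $k=0$ term (which equals $\mathbb{P}(X_0\in\mathcal{A}\mid X_0=j)=1$) from the terms $k\geq 1$ (which vanish because the event forces $X_0\in\mathcal{N}\backslash\mathcal{A}$ while $X_0=j\in\mathcal{A}$). Your added remarks on the vacuous constraints at $k=0$ and the alternative via Lemma~\ref{lemma-g} are fine but not needed.
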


\begin{proof}
\begin{eqnarray*}
c(j\rightarrow A)&=&\sum_{k=0}^{\infty}\mathbb{P}\bigg( \{X_m \in \mathcal{N} \backslash \mathcal{A}, 0 \leq m < k\},\\
& &\hspace*{1cm}X_k \in \mathcal{A}, X_0 \neq \cdots \neq X_k \big| X_0 = j\bigg)\\
&=& \mathbb{P}(X_0 \in \mathcal{A} \big| X_0 = j) +\\ 
& &\sum_{k=1}^{\infty}\mathbb{P}\bigg( X_0 \in \mathcal{N}\backslash \mathcal{A}, \{X_m \in \mathcal{N} \backslash \mathcal{A}, 0 < m < k\},\\
& &\hspace*{1cm} X_k \in \mathcal{A}, X_0 \neq X_1 \neq \cdots X_k \big| X_0 = j\bigg)
\end{eqnarray*}
Since $j\in \mathcal{A}$, all the terms in the summation are zero, and hence,
$c(j\rightarrow A) = \mathbb{P}(X_0 \in \mathcal{A} \big| X_0 = j) = 1$
\end{proof}

\begin{lemma}\label{lemma-jtoA-split}
$c(j\rightarrow \mathcal{A}) = c(j \stackrel{v}{\rightarrow} \mathcal{A}) + c^{\mathcal{N}\backslash \{v\}} (j \rightarrow \mathcal{A})$
\end{lemma}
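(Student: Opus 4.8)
The plan is to decompose the event defining $c(j \rightarrow \mathcal{A})$ according to whether or not the acyclic path from $j$ to $\mathcal{A}$ passes through the node $v$. Fix a starting state $X_0 = j$ (with $j \notin \mathcal{A}$, $v \notin \mathcal{A}$, as stipulated in the notation). For each $k$, the event
\[
E_k = \{X_m \in \mathcal{N}\backslash\mathcal{A},\ 0 \leq m < k;\ X_k \in \mathcal{A};\ X_0 \neq X_1 \neq \cdots \neq X_k\}
\]
is the event that the chain first hits $\mathcal{A}$ at time $k$ via an acyclic path. Partition $E_k$ into $E_k \cap \{\exists\, 0 \leq u < k : X_u = v\}$ and $E_k \cap \{X_u \neq v \text{ for all } 0 \leq u < k\}$. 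These two sub-events are disjoint and their union is $E_k$, so $\mathbb{P}(E_k \mid X_0 = j)$ is the sum of the two corresponding probabilities.

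Summing over $k$: the first family of sub-events sums to $c(j \stackrel{v}{\rightarrow} \mathcal{A})$ by definition~(\ref{eq-defn}) with $\mathcal{W} = \mathcal{N}$. For the second family, I would observe that requiring $X_u \neq v$ for all $0 \leq u < k$, together with $X_m \in \mathcal{N}\backslash\mathcal{A}$ for $0 \leq m < k$ and $X_k \in \mathcal{A}$ (and $v \notin \mathcal{A}$, so the constraint at time $k$ is automatic), is exactly the statement that the acyclic path uses only intermediate nodes from $\mathcal{N}\backslash\{v\}$. Hence the second family sums to $c^{\mathcal{N}\backslash\{v\}}(j \rightarrow \mathcal{A})$. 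Adding the two gives the claimed identity.

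The only subtlety worth checking — and the one place where the argument could go wrong if the definitions were slightly different — is the boundary behaviour at the endpoint: one must confirm that the condition "$\exists\, 0 \leq u < k : X_u = v$" in~(\ref{eq-defn}) ranges over $u < k$ (not $u \leq k$), so that its negation "$X_u \neq v$ for all $u < k$" genuinely matches the intermediate-node restriction defining $c^{\mathcal{N}\backslash\{v\}}$, and that since $v \notin \mathcal{A}$ and $v \notin \{j\} \cup$ (the hit set), no path that hits $\mathcal{A}$ can have $X_k = v$ anyway. Since the notation was set up precisely this way, the decomposition is clean. The rest is just interchanging the (absolutely convergent, as all terms are nonnegative and bounded by hitting probabilities) sum over $k$ with the finite partition, which is routine.
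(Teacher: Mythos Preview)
Your proposal is correct and follows essentially the same approach as the paper's own proof: partition each event $E_k$ according to whether the acyclic path visits $v$ before hitting $\mathcal{A}$, identify the two pieces with $c(j \stackrel{v}{\rightarrow} \mathcal{A})$ and $c^{\mathcal{N}\backslash\{v\}}(j \rightarrow \mathcal{A})$ respectively, and use $v \notin \mathcal{A}$ to handle the endpoint. Your discussion of the boundary case at $u = k$ is slightly more explicit than the paper's, but the argument is the same.
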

This property means that the probability of reaching $\mathcal{A}$ starting from $j$ through an acyclic path, can be split into the probability of those paths via node $v$ and those that avoid node $v$. \newline

\begin{proof}
\begin{eqnarray*}
\lefteqn{c(j\rightarrow A)}\\
&=&\sum_{k=0}^{\infty}\mathbb{P}\bigg( \{X_m \in \mathcal{N} \backslash \mathcal{A}, 0 \leq m < k, X_k \in \mathcal{A},\\
& &\hspace{2cm}X_0 \neq X_1 \neq \cdots \neq X_k \}\big| X_0 = j\bigg)\\
&=&\sum_{k=0}^{\infty}\mathbb{P}\bigg( \{X_m \in \mathcal{N} \backslash\mathcal{A},0 \leq m < k, X_k \in \mathcal{A},\\
& &\hspace{2cm}X_0 \neq X_1 \neq \cdots \neq X_k ,\\   
& &\hspace{2cm} \exists\ 0\leq u < k,\ s.t.\ X_u =v \} \big| X_0 = j\bigg) + \\
& &\sum_{k=0}^{\infty}\mathbb{P}\bigg( \{X_m \in \mathcal{N} \backslash \mathcal{A},0 \leq m < k,\\
& &\hspace{2cm}X_k \in \mathcal{A}, X_0 \neq X_1 \neq \cdots\neq X_k ,\\ 
& &\hspace{2cm} \nexists\ 0\leq u < k,\ s.t.\ X_u =v \}\big| X_0 = j\bigg)\\
&=& c(j\stackrel{v}{\rightarrow}\mathcal{A}) +\\
& &\sum_{k=0}^{\infty}\mathbb{P}\bigg( \{X_m \in (\mathcal{N}\backslash\{v\})\backslash \mathcal{A},0 \leq m < k, X_k \in \mathcal{A}\backslash\{v\}, \\
& &\hspace*{2cm}X_0 \neq X_1 \neq \cdots \neq X_k \}\big| X_0 = j\bigg)\\
\end{eqnarray*}
Since, $v \notin \mathcal{A}$, $\mathcal{A}\backslash\{v\} = \mathcal{A}$ and we get,\newline
$c(j\rightarrow \mathcal{A}) = c(j \stackrel{v}{\rightarrow} \mathcal{A}) + c^{\mathcal{N}\backslash \{v\}} (j \rightarrow \mathcal{A})$
\end{proof}

\begin{lemma}\label{lemma-jtoAviav-split}
$c(j\rightarrow \mathcal{A} \cup \{v\}) = c^{\mathcal{N}\backslash \mathcal{A}} (j \rightarrow v) + c^{\mathcal{N}\backslash \{v\}} (j \rightarrow \mathcal{A})$, for all $v \notin \mathcal{A}$ \newline
\end{lemma}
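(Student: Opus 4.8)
The plan is to expand the left-hand side $c(j\rightarrow \mathcal{A}\cup\{v\})$ directly from its definition in Equation~\ref{eq-defn} and then partition the underlying event according to \emph{which} element of $\mathcal{A}\cup\{v\}$ the acyclic path hits first. Concretely, for each $k$ the event
\[\{X_m \in \mathcal{N}\backslash(\mathcal{A}\cup\{v\}),\ 0\le m<k,\ X_k\in\mathcal{A}\cup\{v\},\ X_0\neq X_1\neq\cdots\neq X_k\}\]
splits into the two sub-events $\{X_k = v\}$ and $\{X_k\in\mathcal{A}\}$, and these are disjoint and exhaustive because $v\notin\mathcal{A}$. Summing the probabilities over $k$ thus writes $c(j\rightarrow \mathcal{A}\cup\{v\})$ as a sum of two terms, one per case; this is structurally the same kind of case-split used in the proof of Lemma~\ref{lemma-jtoA-split}.

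For the first case, on $\{X_k=v\}$ the intermediate constraint $X_m\in\mathcal{N}\backslash(\mathcal{A}\cup\{v\})$ for $0\le m<k$ is exactly $X_m\in(\mathcal{N}\backslash\mathcal{A})\backslash\{v\}$, so together with $X_k=v$ and acyclicity this is precisely the event defining $c^{\mathcal{N}\backslash\mathcal{A}}(j\rightarrow v)$; summing over $k$ recovers that term. For the second case, on $\{X_k\in\mathcal{A}\}$ the intermediate constraint reads $X_m\in(\mathcal{N}\backslash\{v\})\backslash\mathcal{A}$, and since $v\notin\mathcal{A}$ we have $X_k\in\mathcal{A}=\mathcal{A}\cap(\mathcal{N}\backslash\{v\})$, which is exactly the event defining $c^{\mathcal{N}\backslash\{v\}}(j\rightarrow \mathcal{A})$. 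Adding the two contributions gives the stated identity.

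The argument is essentially bookkeeping with the definitions, so there is no deep obstacle; the one place to be careful is checking that the two cases are genuinely disjoint and exhaustive (this is where $v\notin\mathcal{A}$ is used) and that, in each case, the restriction imposed on the intermediate states coincides with the state space appearing in the superscript of the matching $c^{\mathcal{W}}(\cdot)$ term. In particular, in the first case the path automatically avoids $v$ as an intermediate node because it is acyclic and terminates at $v$, so no extra condition is lost when identifying it with $c^{\mathcal{N}\backslash\mathcal{A}}(j\rightarrow v)$.
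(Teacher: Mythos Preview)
Your proposal is correct and follows essentially the same approach as the paper: expand $c(j\rightarrow\mathcal{A}\cup\{v\})$ from the definition, split the event at step $k$ according to whether $X_k=v$ or $X_k\in\mathcal{A}$ (disjoint and exhaustive since $v\notin\mathcal{A}$), and identify the two resulting sums with $c^{\mathcal{N}\backslash\mathcal{A}}(j\rightarrow v)$ and $c^{\mathcal{N}\backslash\{v\}}(j\rightarrow\mathcal{A})$ respectively. Your extra remark that acyclicity makes the ``avoid $v$ as an intermediate'' condition automatic in the first term is a nice clarification that the paper leaves implicit.
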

This property means that the probability of reaching $\mathcal{A} \cup \{v\}$ through acyclic path, can be split into the probability of reaching $\mathcal{A}$ avoiding node $v$, and that of reaching node $v$, avoiding set $\mathcal{A}$. \newline
\begin{proof}
\begin{eqnarray*}
\lefteqn{c(j\rightarrow\mathcal{A}\cup \{v\})}\\
&=&\sum_{k=0}^{\infty}\mathbb{P}\bigg( \{X_m \in \mathcal{N} \backslash(\mathcal{A}\cup \{v\}), 0 \leq m < k, X_k \in \mathcal{A}\cup\{v\},\\
& &\hspace*{2cm} X_0 \neq X_1 \neq \cdots \neq X_k \}\big| X_0 = j\bigg)\\
&=& \sum_{k=0}^{\infty}\mathbb{P}\bigg( \{X_m \in (\mathcal{N}\backslash\mathcal{A})\backslash\{v\}, 0 \leq m < k, X_k = v,\\
& &\hspace*{2cm} X_0 \neq X_1 \neq \cdots \neq X_k \}\big| X_0 = j\bigg)+\\
& &\sum_{k=0}^{\infty}\mathbb{P}\bigg( \{X_m \in (\mathcal{N}\backslash\{v\})\backslash\mathcal{A}, 0 \leq m < k, X_k \in \mathcal{A},\\
& &\hspace*{2cm} X_0 \neq X_1 \neq \cdots \neq X_k \}\big| X_0 = j\bigg)\\
&=& c^{\mathcal{N}\backslash \mathcal{A}} (j \rightarrow v) + c^{\mathcal{N}\backslash \{v\}} (j \rightarrow \mathcal{A})\\ 
\end{eqnarray*}

The second equality results from the assumption that $v \notin \mathcal{A}$. Note that this property can be extended to $c(j\rightarrow \mathcal{A} \cup \mathcal{B})$, where  $\mathcal{A}$ and $\mathcal{B}$ are any two disjoint sets.
\end{proof}

Before stating and proving the next property, we wish to prove the following two sublemmas. 

\begin{sublemma}\label{sublemma-jtov}
$c(j \rightarrow v) = \sum_{L' \subseteq \mathcal{N}-\{j,v\}} \sum_{L \in \Pi(L')} p_{j,v} (L)$

where $\Pi(L')$ is the set of all permutations of $L'$, and for $L =\{ l_1, l_2, \cdots l_{k-1} \}$,
\begin{equation}
p_{j,v}(L) = \left\{ \begin{array}{ll}
            p_{j,l_1} p_{l_1,l_2} \cdots p_{l_{k-1},v} & \mbox{if $L = \{l_1, \ldots l_{k-1}\}$} \\
			p_{j,v} & \mbox{if $L = \emptyset$}
			\end{array}
			\right. \label{eq-pjv-pathL}
\end{equation}
\end{sublemma}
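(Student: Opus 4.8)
The plan is to obtain the identity by unwinding the definition of $c(j\to v)$ into its length-indexed pieces $c_k(j\to v)$ --- each of which was already evaluated as an explicit nested sum in Section~\ref{DTMC-interpret} --- and then regrouping that nested sum, which ranges over \emph{ordered} tuples of intermediate states, into a sum over (subset, permutation) pairs. The statement is essentially a re-bookkeeping of the formula for $g_j^{(\mathcal{N},i)}(k)$, so no new probabilistic input is needed beyond what is already in the excerpt.

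First I would recall that, by definition, $c(j\to v)=\sum_{k=0}^{\infty}c_k(j\to v)$, where the events indexed by distinct $k$ are pairwise disjoint (along a self-avoiding trajectory the first hitting time of $v$ is determined by the trajectory), so the sum is legitimate. Since $j\neq v$ we have $c_0(j\to v)=0$, and from the DTMC computation in Section~\ref{DTMC-interpret}, specialized with $i=v$,
$$c_k(j\to v)=\sum_{l_1}\sum_{l_2}\cdots\sum_{l_{k-1}}p_{j,l_1}\,p_{l_1,l_2}\cdots p_{l_{k-1},v}\qquad(k\geq 1),$$
where $l_m$ ranges over $\mathcal{N}\backslash\{j,v,l_1,\dots,l_{m-1}\}$ and the $k=1$ term is simply $p_{j,v}$.

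Next I would observe that specifying an ordered $(k-1)$-tuple $(l_1,\dots,l_{k-1})$ of pairwise distinct elements of $\mathcal{N}\backslash\{j,v\}$ is the same as specifying its underlying set $L'=\{l_1,\dots,l_{k-1}\}\subseteq\mathcal{N}\backslash\{j,v\}$, which has cardinality $k-1$, together with an ordering $L\in\Pi(L')$; and that under this correspondence the summand $p_{j,l_1}\,p_{l_1,l_2}\cdots p_{l_{k-1},v}$ is exactly $p_{j,v}(L)$ as defined in Equation~\ref{eq-pjv-pathL} (with $L'=\emptyset$, of size $0$, accounting for the $k=1$ term $p_{j,v}$). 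Summing $c_k$ over $k\geq 1$ and reindexing then gives
$$c(j\to v)=\sum_{k\geq 1}\ \sum_{\substack{L'\subseteq\mathcal{N}\backslash\{j,v\}\\ |L'|=k-1}}\ \sum_{L\in\Pi(L')}p_{j,v}(L)=\sum_{L'\subseteq\mathcal{N}\backslash\{j,v\}}\ \sum_{L\in\Pi(L')}p_{j,v}(L),$$
since every subset $L'$ of $\mathcal{N}\backslash\{j,v\}$ contributes at exactly one value of $k$, namely $k=|L'|+1$. This is the claimed formula.

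The only step requiring genuine care --- and the one I would treat as the main obstacle --- is confirming that the constrained nested sum really enumerates every self-avoiding $j$--$v$ path exactly once: that ``$X_0\neq X_1\neq\cdots\neq X_k$'' in the event defining $c_k$ is meant in the self-avoiding sense (all of $X_0,\dots,X_k$ distinct), and that the combined constraints $X_m\in\mathcal{N}\backslash\{v\}$ for $m<k$, $X_k=v$, and self-avoidance translate precisely into ``$l_1,\dots,l_{k-1}$ distinct and lying in $\mathcal{N}\backslash\{j,v\}$''. Once that is pinned down, everything else is bookkeeping; I would also note that any permutation $L$ not realizable as a walk contributes $0$ because some factor $p_{l_m,l_{m+1}}$ vanishes, so the right-hand side is genuinely a sum over self-avoiding paths from $j$ to $v$, which makes the equality transparent.
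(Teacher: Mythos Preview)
Your proposal is correct and follows essentially the same route as the paper's proof: expand $c(j\to v)$ as a sum over path lengths, write each length-$k$ contribution as a nested sum over distinct intermediate states using the chain rule and the Markov property, and then reindex ordered tuples as (subset, permutation) pairs. The paper compresses all of this into two displayed lines and a one-sentence justification, whereas you spell out the bijection and the $k=1$ base case more carefully, but the underlying argument is identical.
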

\begin{proof}
\begin{eqnarray*}
\lefteqn{c(j \rightarrow v)}\\
&=& \sum_{k=1}^{\infty} \mathbb{P} \bigg( X_1 =l_1,X_2=l_2,\ldots X_{k-1}=l_{k-1},X_k=v,\\
& & l_1 \neq l_2 \neq \ldots l_{k-1} \neq j \neq v \big| X_0 =j\bigg) \\
&=& \sum_{L' \subseteq \mathcal{N} -\{j,v\}} \sum_{L \in \Pi(L')} p_{j,v} (L)
\end{eqnarray*}

The second equality is by using the chain rule and the Markov property of $X_k$.
\end{proof}

\begin{sublemma}\label{sublemma-jtoAviav}

\[c(j \stackrel{v}{\rightarrow} \mathcal{A}) = \sum_{L' \subseteq \mathcal{N}\backslash\mathcal{A}-\{j,v\}} \sum_{L \in \Pi(L')} p_{j,v} (L) c^{\mathcal{N}\backslash(L'\cup\{j\})} (v \rightarrow \mathcal{A})\]
\end{sublemma}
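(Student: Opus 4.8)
The plan is to decompose an acyclic first-passage path from $j$ to $\mathcal{A}$ that visits $v$ at the \emph{unique} step where it equals $v$; uniqueness holds because such a path is acyclic and $j\neq v$ (recall from the remark after Equation~\ref{eq-defn} that $v,j\notin\mathcal{A}$, and we take $j\neq v$, the only case of interest). Starting from the series in Equation~\ref{eq-defn} for $c(j\stackrel{v}{\rightarrow}\mathcal{A})$, I would, for each $k$, partition the defining event according to the step $u$ at which $X_u=v$ (necessarily $1\le u<k$, since $X_0=j\neq v$ and $X_k\in\mathcal{A}\not\ni v$) and according to the ordered tuple $X_1=l_1,\dots,X_{u-1}=l_{u-1}$ of states visited strictly before $v$. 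The acyclicity constraint $X_0\neq X_1\neq\cdots\neq X_k$ forces $j,v,l_1,\dots,l_{u-1}$ to be pairwise distinct, and $X_m\in\mathcal{N}\backslash\mathcal{A}$ for $m<k$ forces $l_1,\dots,l_{u-1}\notin\mathcal{A}$; hence $L':=\{l_1,\dots,l_{u-1}\}\subseteq\mathcal{N}\backslash\mathcal{A}-\{j,v\}$ and $(l_1,\dots,l_{u-1})\in\Pi(L')$. Summing over $u$ and over the ordered tuple is therefore exactly the double sum $\sum_{L'\subseteq\mathcal{N}\backslash\mathcal{A}-\{j,v\}}\sum_{L\in\Pi(L')}$ appearing on the right-hand side.

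Next I would invoke the Markov property, exactly as in the proof of Sublemma~\ref{sublemma-jtov}, to factor the probability of each such path into a prefix contribution and a suffix contribution:
\[
\mathbb{P}\big(X_1=l_1,\dots,X_{u-1}=l_{u-1},X_u=v\,\big|\,X_0=j\big)\cdot\mathbb{P}\big(\text{remainder of the path}\,\big|\,X_u=v\big).
\]
The prefix factor is $p_{j,l_1}p_{l_1,l_2}\cdots p_{l_{u-1},v}$, which is precisely $p_{j,v}(L)$ for $L=(l_1,\dots,l_{u-1})$ as defined in Equation~\ref{eq-pjv-pathL} (and equals $p_{j,v}$ when $u=1$, i.e.\ $L=\emptyset$). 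It then remains to recognize the suffix factor: conditioned on $X_u=v$, the remainder of the path must reach $\mathcal{A}$ at some step $k>u$ through states $X_{u+1},\dots,X_{k-1}$ that lie in $\mathcal{N}\backslash\mathcal{A}$, are mutually distinct, and are distinct from $v,j,l_1,\dots,l_{u-1}$. After reindexing time from $u$, this is exactly the event underlying $c^{\mathcal{N}\backslash(L'\cup\{j\})}(v\rightarrow\mathcal{A})$: restricting the admissible intermediate states to $\mathcal{W}=\mathcal{N}\backslash(L'\cup\{j\})$ enforces avoidance of $j$ and of $L'$, while the acyclicity already built into the definition of $c^{\mathcal{W}}(v\rightarrow\mathcal{A})$ in Equation~\ref{eq-defn} enforces avoidance of $v$ itself; moreover $v\in\mathcal{W}$ since $v\neq j$ and $v\notin L'$, so the quantity is well defined. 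Collecting the two factors and summing over $L'$ and $L\in\Pi(L')$ would give the stated identity.

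The step I expect to require the most care, and the one I would write out in detail, is the verification that this splitting is genuinely \emph{bijective}: that concatenating any valid prefix $j,l_1,\dots,l_{u-1},v$ with any valid acyclic first-passage $v\rightarrow\mathcal{A}$ path inside $\mathcal{N}\backslash(L'\cup\{j\})$ always produces a valid acyclic first-passage $j\rightarrow\mathcal{A}$ path through $v$, and conversely, with nothing over- or under-counted. This reduces to checking that the suffix, being confined to $\mathcal{N}\backslash(L'\cup\{j\})$ and being acyclic, shares no node with the prefix other than the junction node $v$ --- its terminal node lies in $\mathcal{A}$ and hence differs from every prefix node, which lie outside $\mathcal{A}$. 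Everything else is the routine chain-rule / Markov-property computation already carried out in Sublemma~\ref{sublemma-jtov}.
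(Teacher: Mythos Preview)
Your proposal is correct and follows essentially the same route as the paper: partition each acyclic $j\to\mathcal{A}$ path through $v$ at the unique step $u$ with $X_u=v$, apply the Markov property there to factor into a prefix $p_{j,v}(L)$ and a suffix $c^{\mathcal{N}\backslash(L'\cup\{j\})}(v\rightarrow\mathcal{A})$, and sum over the ordered prefix $L\in\Pi(L')$. Your explicit bijectivity check is a welcome addition over the paper's terser derivation, but the underlying decomposition and use of the Markov property are the same.
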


\begin{proof}
\begin{eqnarray*}
\lefteqn{c(j\stackrel{v}{\rightarrow} \mathcal{A})}\\
&=& \sum_{k=0}^{\infty} \mathbb{P} \bigg( \{ X_m \in \mathcal{N}\backslash \mathcal{A} , 0 \leq m < k, X_k \in \mathcal{A},\\
& & \hspace{1cm} X_0 \neq X_1 \neq \cdots \neq X_k, \\
& & \hspace{1cm} \exists\ 0 \leq u < k, \ s.t.\ X_u = v \}\big| X_0=j \bigg) \\
&=& \sum_{k=0}^{\infty} \sum_{L' \subseteq \mathcal{N}\backslash\mathcal{A}} \sum_{u=1}^{k-1} \mathbb{P} \bigg( X_1 =l_1, \ldots ,X_{u-1} = l_{u-1}, X_u = v,\\ 
& & \hspace{1cm} l_i \notin \mathcal{A}, \{X_m \in \mathcal{N} \backslash \mathcal{A}, u < m < k\}, X_k \in \mathcal{A},\\
& &\hspace{1cm} X_0 \neq X_1 \neq \cdots \neq X_k \big| X_0 = j\bigg)\\
&=& \sum_{k=0}^{\infty} \sum_{L' \subseteq \mathcal{N}\backslash\mathcal{A}} \sum_{u=1}^{k-1} \mathbb{P} \bigg( X_1 =l_1, \ldots ,X_{u-1} = l_{u-1}, X_u = v,\\
& & l_1 \neq \ldots \neq l_{u-1}, l_i \neq v \neq j,\ l_i \notin \mathcal{A},\\
& &\{X_m \in (\mathcal{N} \backslash \{L' \cup j\})\backslash \mathcal{A}, u \leq m < k\}, X_k \in \mathcal{A} \big| X_0 =j\bigg)\\
&=&\sum_{k=0}^{\infty} \sum_{L' \subseteq \mathcal{N}\backslash\mathcal{A}} \sum_{u=1}^{k-1} \mathbb{P} \bigg( X_1 =l_1, \ldots ,X_{u-1} = l_{u-1},X_u = v,\\
& &\hspace{1cm} l_1 \neq \ldots \neq l_{u-1}, l_i \neq v \neq j, l_i \notin \mathcal{A} \big| X_0 =j\bigg)\times\\
& &\mathbb{P} \bigg(\{X_m \in (\mathcal{N} \backslash \{L' \cup j\})\backslash \mathcal{A}, u \leq m < k\},\\
& &\hspace{5cm}X_k \in \mathcal{A} \big| X_u =v\bigg)\\
&=& \sum_{L' \subseteq \mathcal{N} -(\mathcal{A} \cup \{j,v\})} \sum_{L \in \Pi(L')} p_{j,v} (L) c^{\mathcal{N} \backslash \{L' \cup j\}} (v\rightarrow \mathcal{A})
\end{eqnarray*}

where $p_{j,v}(L)$ defined as above in Equation~\ref{eq-pjv-pathL} and $L' = \{l_1, \ldots l_{u-1}\}$. The penultimate equality is by applying Markov property at $X_u =v$.
\end{proof}

Now we can state and prove the next property. 
\begin{lemma}\label{lemma-inequality}
$c(j \stackrel{v}{\rightarrow} \mathcal{A}) \leq c^{\mathcal{N} \backslash \mathcal{A}} ( j \rightarrow v) $ 
\end{lemma}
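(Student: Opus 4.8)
The plan is to derive the inequality directly from the two sublemmas just established, turning it into an elementary term-by-term comparison of two series of path weights. First I would apply Sublemma~\ref{sublemma-jtoAviav} to write
\[ c(j \stackrel{v}{\rightarrow} \mathcal{A}) = \sum_{L' \subseteq \mathcal{N}\backslash(\mathcal{A}\cup\{j,v\})} \sum_{L \in \Pi(L')} p_{j,v}(L)\, c^{\mathcal{N}\backslash(L'\cup\{j\})}(v \rightarrow \mathcal{A}). \]
Every factor $c^{\mathcal{N}\backslash(L'\cup\{j\})}(v \rightarrow \mathcal{A})$ is a probability, hence at most $1$, and every $p_{j,v}(L)$ is a product of transition probabilities, hence nonnegative; dropping the last factor therefore gives
\[ c(j \stackrel{v}{\rightarrow} \mathcal{A}) \le \sum_{L' \subseteq \mathcal{N}\backslash(\mathcal{A}\cup\{j,v\})} \sum_{L \in \Pi(L')} p_{j,v}(L). \]

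Next I would identify the right-hand side with $c^{\mathcal{N}\backslash\mathcal{A}}(j \rightarrow v)$. This is exactly the content of Sublemma~\ref{sublemma-jtov}, except that the ambient state space is the restricted set $\mathcal{W} = \mathcal{N}\backslash\mathcal{A}$ rather than $\mathcal{N}$: the same chain-rule-plus-Markov-property computation shows that for any $\mathcal{W}$ with $j,v \in \mathcal{W}$,
\[ c^{\mathcal{W}}(j \rightarrow v) = \sum_{L' \subseteq \mathcal{W}\backslash\{j,v\}} \sum_{L \in \Pi(L')} p_{j,v}(L), \]
since an acyclic first-passage path from $j$ to $v$ using intermediate vertices only from $\mathcal{W}$ is precisely an ordered tuple of distinct vertices of $\mathcal{W}\backslash\{j,v\}$. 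Taking $\mathcal{W} = \mathcal{N}\backslash\mathcal{A}$ and using $j,v \notin \mathcal{A}$ (so that $\mathcal{W}\backslash\{j,v\} = \mathcal{N}\backslash(\mathcal{A}\cup\{j,v\})$) makes the two double sums literally identical, and the lemma follows.

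The only point that needs a little care — and the natural place to slip — is matching the index sets of the two double sums: one must check that the outer summation in the bounded expression ranges over exactly $\mathcal{N}\backslash(\mathcal{A}\cup\{j,v\})$ and that this coincides with $(\mathcal{N}\backslash\mathcal{A})\backslash\{j,v\}$, which uses the standing assumptions $v,j \notin \mathcal{A}$. An alternative, sublemma-free route is a direct event-inclusion argument: in any acyclic realization contributing to $c(j \stackrel{v}{\rightarrow} \mathcal{A})$ the node $v$ occurs exactly once, say at time $u$, and the prefix $X_0,\dots,X_u$ is then an acyclic first-passage path from $j$ to $v$ whose intermediate nodes all lie in $\mathcal{N}\backslash\mathcal{A}$ (because $\mathcal{A}$ is not hit before time $k>u$); grouping the events of $c(j\stackrel{v}{\rightarrow}\mathcal{A})$ by this prefix exhibits them as a disjoint sub-collection of the events defining $c^{\mathcal{N}\backslash\mathcal{A}}(j\rightarrow v)$, which yields the inequality. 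I would present the first route, since it reuses machinery already in place.
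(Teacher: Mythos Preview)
Your proposal is correct and follows essentially the same route as the paper: apply Sublemma~\ref{sublemma-jtoAviav}, bound each probability factor $c^{\mathcal{N}\backslash(L'\cup\{j\})}(v\rightarrow\mathcal{A})$ by $1$, and recognize the resulting double sum as $c^{\mathcal{N}\backslash\mathcal{A}}(j\rightarrow v)$ via Sublemma~\ref{sublemma-jtov}. Your extra remark that Sublemma~\ref{sublemma-jtov} must be read with ambient space $\mathcal{W}=\mathcal{N}\backslash\mathcal{A}$ (and that this requires $j,v\notin\mathcal{A}$ to match the index sets) is a point the paper leaves implicit, so your write-up is if anything slightly more careful.
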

 
\begin{proof}
\begin{eqnarray*}
\lefteqn{c(j \stackrel{v}{\rightarrow} \mathcal{A})}\\
&=& \sum_{L' \subseteq \mathcal{N} -(\mathcal{A} \cup \{j,v\})} \sum_{L \in \Pi(L')}  p_{j,v} (L) c^{\mathcal{N} \backslash (L' \cup \{j\})} (v \rightarrow \mathcal{A})\\
&\leq& \sum_{L' \subseteq \mathcal{N} -(\mathcal{A} \cup \{j,v\})} \sum_{L \in \Pi(L')} p_{j,v} (L) \\
&=& c^{\mathcal{N} \backslash \mathcal{A}} (j \rightarrow v)
\end{eqnarray*}

In the above proof the equalities are due to Sublemmas~\ref{sublemma-jtov} and \ref{sublemma-jtoAviav}. The inequality is because $c^{\mathcal{N} \backslash (L' \cup \{j\})} (v \rightarrow \mathcal{A})$, being a probability term, is less than 1.
\end{proof}

\begin{lemma}\label{lemma-decrease-taboo}
For $\mathcal{A} \subseteq \mathcal{B}$, $c^{\mathcal{N} \backslash \mathcal{A}} (j \rightarrow v) \geq  c^{\mathcal{N} \backslash \mathcal{B}} (j \rightarrow v)$
\end{lemma}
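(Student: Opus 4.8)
The plan is to interpret both sides of the claimed inequality $c^{\mathcal{N}\backslash\mathcal{A}}(j\rightarrow v) \geq c^{\mathcal{N}\backslash\mathcal{B}}(j\rightarrow v)$ directly from the defining expression (Equation~\ref{eq-defn}) as a sum over acyclic paths from $j$ to $v$ in the DTMC, and to show that the event on the right is contained in the event on the left. Recall that $c^{\mathcal{W}}(j\rightarrow v)$ is the probability, starting from $X_0=j$, of the union over $k$ of the events that $X_0\neq X_1\neq\cdots\neq X_k$, that $X_k=v$, and that all intermediate states $X_0,\ldots,X_{k-1}$ lie in $\mathcal{W}\backslash\{v\}$. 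The only dependence on the taboo parameter is through this last constraint: a path that uses only intermediate nodes from $\mathcal{N}\backslash\mathcal{B}$ automatically uses only intermediate nodes from $\mathcal{N}\backslash\mathcal{A}$, since $\mathcal{A}\subseteq\mathcal{B}$ gives $\mathcal{N}\backslash\mathcal{B}\subseteq\mathcal{N}\backslash\mathcal{A}$.

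Concretely, I would use the path-decomposition already established in Sublemma~\ref{sublemma-jtov}: writing $c^{\mathcal{W}}(j\rightarrow v)=\sum_{L'\subseteq \mathcal{W}\backslash\{j,v\}}\sum_{L\in\Pi(L')} p_{j,v}(L)$, the term $p_{j,v}(L)$ is a product of transition probabilities that does not depend on the taboo set at all — only the index set over which we sum does. Hence
\begin{eqnarray*}
c^{\mathcal{N}\backslash\mathcal{A}}(j\rightarrow v) - c^{\mathcal{N}\backslash\mathcal{B}}(j\rightarrow v)
&=& \sum_{\substack{L'\subseteq (\mathcal{N}\backslash\mathcal{A})\backslash\{j,v\}}}\sum_{L\in\Pi(L')} p_{j,v}(L)
 - \sum_{\substack{L'\subseteq (\mathcal{N}\backslash\mathcal{B})\backslash\{j,v\}}}\sum_{L\in\Pi(L')} p_{j,v}(L)\\
&=& \sum_{\substack{L'\subseteq (\mathcal{N}\backslash\mathcal{A})\backslash\{j,v\}\\ L'\not\subseteq (\mathcal{N}\backslash\mathcal{B})}}\sum_{L\in\Pi(L')} p_{j,v}(L)\ \geq\ 0,
\end{eqnarray*}
since each $p_{j,v}(L)$ is a nonnegative product of probabilities. (Of course Sublemma~\ref{sublemma-jtov} is stated for $\mathcal{W}=\mathcal{N}$, so I would either first note the obvious generalization of that sublemma to arbitrary $\mathcal{W}$, or simply argue at the level of the event inclusion as in the first paragraph, which needs no extra lemma.)

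There is essentially no obstacle here; this is a monotonicity-in-the-taboo-set statement and the content is entirely the set inclusion $\mathcal{N}\backslash\mathcal{B}\subseteq\mathcal{N}\backslash\mathcal{A}$ together with the fact that the summands are nonnegative and taboo-independent. The one point worth stating carefully is that restricting the taboo set does not create new terms with negative contribution — which is immediate since every $p_{j,v}(L)$ is a product of entries of the stochastic matrix $\mathbf{P}$ and hence nonnegative. I would keep the proof to three or four lines, citing Sublemma~\ref{sublemma-jtov} (suitably generalized) for the path expansion and then invoking $\mathcal{A}\subseteq\mathcal{B}\implies \mathcal{N}\backslash\mathcal{B}\subseteq\mathcal{N}\backslash\mathcal{A}$ to conclude that the index set on the right is a subset of the index set on the left.
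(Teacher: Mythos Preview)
Your proposal is correct and is essentially identical to the paper's proof: the paper writes $c^{\mathcal{N}\backslash\mathcal{A}}(j\rightarrow v)=\sum_{L'\subseteq \mathcal{N}\setminus(\mathcal{A}\cup\{j,v\})}\sum_{L\in\Pi(L')} p_{j,v}(L)$ (the obvious generalization of Sublemma~\ref{sublemma-jtov}) and observes that as $\mathcal{A}$ grows the index set of admissible $L'$ shrinks, so the sum of nonnegative terms decreases. Your only addition is spelling out the difference explicitly as a residual sum over $L'\cap(\mathcal{B}\setminus\mathcal{A})\neq\emptyset$, which is a harmless elaboration of the same argument.
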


\begin{proof}
Note that, \newline
\begin{equation}
c^{\mathcal{N} \backslash \mathcal{A}} ( j \rightarrow v) = \sum_{L' \subseteq \mathcal{N} -(\mathcal{A} \cup \{j,v\})} \sum_{L \in \Pi(L')}  p_{j,v} (L)
\label{eq-dec-paths}
\end{equation}

In the expression on the right, as $\mathcal{A}$ increases, the number of possible $L'$ decreases, hence $c^{\mathcal{N} \backslash \mathcal{A}}( j \rightarrow v)$ decreases.
\end{proof}

\section{Submodularity of $\sigma^{(\mathcal{N}, \mathcal{A}_0)}$}\label{sec-submod-monotone}

In this section we shall prove the submodularity and monotonicity of acyclic path probabilities and using them, we prove the monotonicity and submodularity of $\sigma^{(\mathcal{N}, \mathcal{A}_0)}$.

\begin{lemma}\label{lemma-c-monotonic}
$c ( j \rightarrow \mathcal{A} )$ is monotonically increasing in $\mathcal{A}$. i.e., For $\mathcal{A} \subseteq \mathcal{B}$, $c(j \rightarrow \mathcal{A}) \leq c(j \rightarrow \mathcal{B})$
\end{lemma}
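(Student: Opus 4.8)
The plan is to reduce to the case of adjoining a single element and then chain the inequalities. Given $\mathcal{A}\subseteq\mathcal{B}$, enumerate $\mathcal{B}\setminus\mathcal{A}=\{v_1,\dots,v_m\}$ and form the increasing chain $\mathcal{A}=\mathcal{A}_0\subset\mathcal{A}_1\subset\cdots\subset\mathcal{A}_m=\mathcal{B}$ with $\mathcal{A}_t=\mathcal{A}\cup\{v_1,\dots,v_t\}$. If I can establish the single-element step, namely $c(j\rightarrow\mathcal{C})\leq c(j\rightarrow\mathcal{C}\cup\{v\})$ whenever $v\notin\mathcal{C}$, then applying it $m$ times (with $\mathcal{C}=\mathcal{A}_t$, $v=v_{t+1}$, noting $v_{t+1}\notin\mathcal{A}_t$) and chaining the resulting inequalities yields $c(j\rightarrow\mathcal{A})\leq c(j\rightarrow\mathcal{B})$.

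It then remains to prove the single-element step, which I would do by cases on the location of $j$. If $j\in\mathcal{C}$, then also $j\in\mathcal{C}\cup\{v\}$, so both sides equal $1$ by Lemma~\ref{lemma-jinA-1}. If $j=v$, then $j\notin\mathcal{C}$ so $c(j\rightarrow\mathcal{C})\leq 1$ (it is a sum of probabilities of disjoint events, one for each first-hitting time $k$), while $c(j\rightarrow\mathcal{C}\cup\{v\})=1$ by Lemma~\ref{lemma-jinA-1}, and the inequality holds.

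The main case is $j\notin\mathcal{C}\cup\{v\}$. Here I would apply Lemma~\ref{lemma-jtoAviav-split} to write $c(j\rightarrow\mathcal{C}\cup\{v\})=c^{\mathcal{N}\setminus\mathcal{C}}(j\rightarrow v)+c^{\mathcal{N}\setminus\{v\}}(j\rightarrow\mathcal{C})$, and Lemma~\ref{lemma-jtoA-split} to write $c(j\rightarrow\mathcal{C})=c(j\stackrel{v}{\rightarrow}\mathcal{C})+c^{\mathcal{N}\setminus\{v\}}(j\rightarrow\mathcal{C})$. Subtracting, the common term $c^{\mathcal{N}\setminus\{v\}}(j\rightarrow\mathcal{C})$ cancels and leaves $c(j\rightarrow\mathcal{C}\cup\{v\})-c(j\rightarrow\mathcal{C})=c^{\mathcal{N}\setminus\mathcal{C}}(j\rightarrow v)-c(j\stackrel{v}{\rightarrow}\mathcal{C})$, which is nonnegative by Lemma~\ref{lemma-inequality}. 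This closes the single-element step and hence the lemma.

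I do not anticipate a real obstacle here: the work is essentially bookkeeping. The only points to be careful about are verifying the side conditions $v\notin\mathcal{C}$ (required by Lemmas~\ref{lemma-jtoAviav-split} and \ref{lemma-inequality}) and $j\notin\mathcal{C}\cup\{v\}$ in the main case (so that $j$ legitimately lies in the set of admissible intermediate states and the two splitting lemmas apply), together with the elementary observation that $c(j\rightarrow\mathcal{C})\leq 1$ directly from its definition as a probability. The reduction and chaining are routine, so the heart of the argument is the one-line inequality supplied by Lemma~\ref{lemma-inequality}.
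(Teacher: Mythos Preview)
Your proposal is correct and follows essentially the same approach as the paper: the core of both arguments is the single-element increment, computed via Lemmas~\ref{lemma-jtoA-split} and~\ref{lemma-jtoAviav-split} and bounded using Lemma~\ref{lemma-inequality}. You are in fact slightly more careful than the paper, which only treats the single-element step and leaves the chaining from $\mathcal{A}$ to $\mathcal{B}$ and the boundary cases $j\in\mathcal{C}$, $j=v$ implicit.
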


\begin{proof}
We need to check, 
\[ c(j\rightarrow \mathcal{A} \cup \{v\}) - c(j\rightarrow \mathcal{A}) \stackrel{?}{\geq} 0\]

Substituting for $c(j\rightarrow \mathcal{A} \cup \{v\})$ and $c(j\rightarrow \mathcal{A})$ from Lemmas~\ref{lemma-jtoA-split} and \ref{lemma-jtoAviav-split}, we have

\begin{eqnarray*}
c(j\rightarrow \mathcal{A} \cup \{v\}) - c(j\rightarrow \mathcal{A}) &=& c^{\mathcal{N}\backslash \mathcal{A}} (j \rightarrow v) - c(j \stackrel{v}{\rightarrow} \mathcal{A})\\
&\geq& 0
\end{eqnarray*}

The last inequality results from Lemma~\ref{lemma-inequality}.
\end{proof}

\begin{lemma}\label{lemma-c-submodular}
$c ( j \rightarrow \mathcal{A} )$ is submodular in $\mathcal{A}$, i.e., $c ( j \rightarrow \mathcal{A}\cup\{v\} ) - c ( j \rightarrow \mathcal{A} )$ decreases, as  $\mathcal{A}$ increases.
\end{lemma}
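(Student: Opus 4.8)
The plan is to prove submodularity of $c(j\rightarrow\mathcal{A})$ directly from the acyclic path-probability representations, reusing the algebra already set up for Lemma~\ref{lemma-c-monotonic}. Concretely, the goal is to show that whenever $\mathcal{A}\subseteq\mathcal{B}$ with $v,j\notin\mathcal{B}$,
\[ c(j\rightarrow\mathcal{A}\cup\{v\})-c(j\rightarrow\mathcal{A})\ \geq\ c(j\rightarrow\mathcal{B}\cup\{v\})-c(j\rightarrow\mathcal{B}). \]
The first step is the reduction already carried out inside the proof of Lemma~\ref{lemma-c-monotonic}: combining Lemmas~\ref{lemma-jtoA-split} and \ref{lemma-jtoAviav-split} gives, for any set $\mathcal{C}$ with $v,j\notin\mathcal{C}$, the identity $c(j\rightarrow\mathcal{C}\cup\{v\})-c(j\rightarrow\mathcal{C})=c^{\mathcal{N}\backslash\mathcal{C}}(j\rightarrow v)-c(j\stackrel{v}{\rightarrow}\mathcal{C})$. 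So it suffices to show that the map $\mathcal{C}\mapsto c^{\mathcal{N}\backslash\mathcal{C}}(j\rightarrow v)-c(j\stackrel{v}{\rightarrow}\mathcal{C})$ is non-increasing in $\mathcal{C}$.

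The second step is to turn this difference into a single, manifestly non-negative sum. Sublemma~\ref{sublemma-jtoAviav} writes $c(j\stackrel{v}{\rightarrow}\mathcal{C})$ as $\sum_{L'\subseteq\mathcal{N}-(\mathcal{C}\cup\{j,v\})}\sum_{L\in\Pi(L')}p_{j,v}(L)\,c^{\mathcal{N}\backslash(L'\cup\{j\})}(v\rightarrow\mathcal{C})$, while Equation~\ref{eq-dec-paths} writes $c^{\mathcal{N}\backslash\mathcal{C}}(j\rightarrow v)$ as $\sum_{L'\subseteq\mathcal{N}-(\mathcal{C}\cup\{j,v\})}\sum_{L\in\Pi(L')}p_{j,v}(L)$ over the very same index set. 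Subtracting term by term,
\[ c^{\mathcal{N}\backslash\mathcal{C}}(j\rightarrow v)-c(j\stackrel{v}{\rightarrow}\mathcal{C})=\sum_{L'\subseteq\mathcal{N}-(\mathcal{C}\cup\{j,v\})}\ \sum_{L\in\Pi(L')}p_{j,v}(L)\,\bigl(1-c^{\mathcal{N}\backslash(L'\cup\{j\})}(v\rightarrow\mathcal{C})\bigr), \]
in which each $p_{j,v}(L)\geq 0$ and each bracketed factor lies in $[0,1]$, being $1$ minus a sum of probabilities of disjoint events (exactly the bound used in the proof of Lemma~\ref{lemma-inequality}).

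The third step compares this expression at $\mathcal{C}=\mathcal{A}$ and at $\mathcal{C}=\mathcal{B}$. Because $\mathcal{A}\subseteq\mathcal{B}$, the index set $\{L':L'\subseteq\mathcal{N}-(\mathcal{B}\cup\{j,v\})\}$ is contained in $\{L':L'\subseteq\mathcal{N}-(\mathcal{A}\cup\{j,v\})\}$, so the $\mathcal{A}$-sum contains a copy of every term indexed by a $\mathcal{B}$-admissible $L'$, plus further non-negative terms. For an $L'$ admissible for both sets, the bracket at $\mathcal{A}$ dominates the bracket at $\mathcal{B}$ provided $c^{\mathcal{N}\backslash(L'\cup\{j\})}(v\rightarrow\mathcal{A})\leq c^{\mathcal{N}\backslash(L'\cup\{j\})}(v\rightarrow\mathcal{B})$; this is just the monotonicity of a first-hitting acyclic probability in its target set, i.e.\ Lemma~\ref{lemma-c-monotonic} read inside the taboo state space $\mathcal{W}=\mathcal{N}\backslash(L'\cup\{j\})$. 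Together, the shrinking index set and the termwise domination give the required inequality; submodularity of $\sigma^{(\mathcal{N},\mathcal{A}_0)}$ itself then follows from Equation~\ref{eq-sigma-in-c}, since it is a sum of such (non-negative, non-increasing) increments over a set of $j$'s that shrinks as $\mathcal{A}_0$ grows.

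The step I expect to be the main obstacle is precisely the monotonicity fact invoked in the third step, since Lemma~\ref{lemma-c-monotonic} is stated only for the full state space $\mathcal{N}$. I would either observe that its proof uses only Lemmas~\ref{lemma-jtoA-split}, \ref{lemma-jtoAviav-split} and \ref{lemma-inequality}, each of which holds verbatim with an arbitrary taboo set $\mathcal{W}$ in place of $\mathcal{N}$, so that the proof transfers to $c^{\mathcal{W}}(\,\cdot\rightarrow\cdot\,)$; or argue directly from a path-sum representation, in the spirit of Lemma~\ref{lemma-decrease-taboo}, that enlarging the target set of a first-hitting acyclic probability can only create, never destroy, first-hitting acyclic paths. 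The only other thing to watch is the bookkeeping of which of $j$, $v$, and the vertices of $L'$ each set is allowed to contain, so that every symbol $c^{\mathcal{W}}(\,\cdot\rightarrow\cdot\,)$ written down meets the standing assumptions $v,j\in\mathcal{W}$ and $v,j$ outside the target.
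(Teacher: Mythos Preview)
Your proposal is correct and follows essentially the same route as the paper: reduce the increment to $c^{\mathcal{N}\backslash\mathcal{A}}(j\rightarrow v)-c(j\stackrel{v}{\rightarrow}\mathcal{A})$ via Lemmas~\ref{lemma-jtoA-split} and~\ref{lemma-jtoAviav-split}, expand both pieces over the common index set $L'\subseteq\mathcal{N}-(\mathcal{A}\cup\{j,v\})$ using Sublemma~\ref{sublemma-jtoAviav} and Equation~\ref{eq-dec-paths}, and then argue that as $\mathcal{A}$ grows the index set shrinks and each bracket $1-c^{\mathcal{N}\backslash(L'\cup\{j\})}(v\rightarrow\mathcal{A})$ decreases by monotonicity. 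In fact you are more scrupulous than the paper on the one delicate point: the paper simply cites Lemma~\ref{lemma-c-monotonic} for the monotonicity of $c^{\mathcal{N}\backslash(L'\cup\{j\})}(v\rightarrow\mathcal{A})$ without remarking that the lemma was stated only for $\mathcal{W}=\mathcal{N}$, whereas you correctly flag this and note that the proof of Lemma~\ref{lemma-c-monotonic} goes through verbatim with any ambient set $\mathcal{W}$ in place of $\mathcal{N}$.
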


\begin{proof}
\begin{eqnarray*}
\lefteqn{c(j\rightarrow \mathcal{A} \cup \{v\}) - c(j\rightarrow \mathcal{A}) = c^{\mathcal{N}\backslash \mathcal{A}} (j \rightarrow v) - c(j \stackrel{v}{\rightarrow} \mathcal{A})}\\
&=& \sum_{L' \subseteq \mathcal{N} -(\mathcal{A} \cup \{j,v\})} \sum_{L \in \Pi(L')} p_{j,v} (L) - \\
& &\sum_{L' \subseteq \mathcal{N} -(\mathcal{A} \cup \{j,v\})} \sum_{L \in \Pi(L')} p_{j,v} (L) c^{\mathcal{N} \backslash (L' \cup \{j\})} (v \rightarrow \mathcal{A}) \\
&=& \sum_{L' \subseteq \mathcal{N} -(\mathcal{A} \cup \{j,v\})} \sum_{L \in \Pi(L')} p_{j,v} (L) \bigg[ 1 - c^{\mathcal{N} \backslash (L' \cup \{j\})} (v \rightarrow \mathcal{A})\bigg]
\end{eqnarray*}
For a fixed $L'$, as $\mathcal{A}$ increases, $c^{\mathcal{N} \backslash (L' \cup \{j\})} (v \rightarrow \mathcal{A})$ increases by Lemma~\ref{lemma-c-monotonic}, and hence the entire term inside the summation decreases, as $\mathcal{A}$ increases. Also, as $\mathcal{A}$ increases, the number of $L'$ that satisfy the constraint in the first summation also decrease. Hence, $c(j\rightarrow \mathcal{A} \cup \{v\}) - c(j\rightarrow \mathcal{A})$ decreases, as $\mathcal{A}$ increases. Thus $c(j\rightarrow \mathcal{A})$ is submodular.
\end{proof}

\subsection{Monotonicity and Submodularity of $\sigma^{(\mathcal{N},\mathcal{A}_0)}$}

Recalling  Equation~\ref{eq-sigma-in-c},

\[\sigma^{(\mathcal{N},\mathcal{A}_0)} = |\mathcal{A}_0| + \sum_{j \notin \mathcal{A}_0} c(j \rightarrow \mathcal{A}_0) \]

Since $c( j \rightarrow \mathcal{A} ) = 1$, for all $j \in \mathcal{A}$ by Lemma~\ref{lemma-jinA-1} we can write

\[\sigma^{(\mathcal{N},\mathcal{A}_0)} = \sum_{j \in \mathcal{N}} c(j \rightarrow \mathcal{A}_0) \]

$c(j \rightarrow \mathcal{A}_0)$ is monotonically increasing and submodular in $A_0$ by Lemmas~\ref{lemma-c-monotonic} and \ref{lemma-c-submodular} and since $\sigma^{(\mathcal{N},\mathcal{A}_0)}$ is a non-negative linear combination of such $c(j \rightarrow \mathcal{A}_0)$,  this automatically proves the monotonicity and submodularity of $\sigma^{(\mathcal{N},\mathcal{A}_0)}$.
 
\section{Examples} 
\label{examples}
In this section, we use the analytical expression to obtain the optimal initial set for some simple LT models. We also show that PageRank matches with the optimal solution obtained from the analytical expression in 2 cases. We provide simulation results for those cases in which PageRank is not optimal, but provides a very good approximation of the greedy solution.
 
\subsection{UISLT Models on a Complete Graph}
We introduce a simple version of the Linear Threshold model, called the Uniform Influence-Susceptance Linear Threshold model (UISLT). In this model, we have two parameters $\alpha_{i}$ and $\beta_{i}$ associated with the node $i$, a measure of the level of influence and susceptance of the node $i$. The social network is a complete graph with the matrix $\mathbf{W}$ defined as follows:\newline
For all $i,j$ with $i \neq j$,  
\[w_{i,j} = \alpha_{i} \times \beta_{j}, \mbox{for all} j \neq i\]
and 

\[w_{i,i} = 1- \sum_{j \neq i} w_{j,i} \]
Note that $\alpha_{i}$'s and $\beta_{i}$'s are chosen such that, $\sum_{j \neq i} w_{j,i} \leq 1$. This implies that, for all $i$,

\[\sum_{j\neq i} \alpha_j \leq \frac{1}{\beta_i} \]

\begin{theorem}
Let $\mathcal{N}=\mathcal{A}_0\cup \{j_1,j_2,\cdots j_k \}$. where $K =|\mathcal{A}_0|$ and  $k = |\mathcal{N}| - K$. Then the total influence of the initial set $\mathcal{A}_0$ in the UISLT model is given by

\[ \sigma^{(\mathcal{N},\mathcal{A}_0)} = |\mathcal{A}_0| + \alpha_{\mathcal{A}_0} \sum_{m=0}^{k-1} h^{(m)} (\mathbf{\alpha},\mathbf{\beta}) \] 

where $\mathbf{\alpha} = \{\alpha_{j_1},\alpha_{j_2},\cdots \alpha_{j_k} \}$ ,\ $\mathbf{\beta} = \{\beta_{j_1},\beta_{j_2},\cdots \beta_{j_k} \}$ ,\ $\alpha_{\mathcal{A}_0} = \sum_{i \in \mathcal{A}_0} \alpha_{i}$, and

\begin{eqnarray*}
\lefteqn{h^{m} (\mathbf{\alpha},\mathbf{\beta})}\\ 
&=& \sum_{ \{l_1,\ldots,l_{m+1}\} \subseteq \{1,\ldots,k\} } \beta_{j_{l_1}} \times \cdots \times \beta_{j_{l_{m+1}}} f^{m}(\alpha_{j_{l_1}},\ldots,\alpha_{j_{l_{m+1}}})\\
\end{eqnarray*}

where,

\[f^{0}(x_1,\ldots,x_t) = 1 \]

and for all $m > 0$,

\[f^{m}(x_1,\ldots,x_t) = m!\sum_{\{y_1,\ldots,y_m\} \subseteq \{x_1,\ldots,x_t\} } y_1 \times \cdots \times y_m \]

\end{theorem}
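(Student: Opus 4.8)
The plan is to start from the self-avoiding path expansion of $\sigma^{(\mathcal{N},\mathcal{A}_0)}$ built up in Sections~\ref{recur-exp} and~\ref{DTMC-interpret}, specialize the edge weights to $w_{i,j}=\alpha_i\beta_j$, and then perform a purely combinatorial reorganization of the resulting sum over loopless paths. Combining Equation~\ref{eq-sigma-in-c} with the explicit expansion of $g_j^{(\mathcal{N},\mathcal{A}_0)}(k)$ obtained in Section~\ref{recur-exp}, we may write
\[\sigma^{(\mathcal{N},\mathcal{A}_0)} \;=\; |\mathcal{A}_0| \;+\; \sum_{i\in\mathcal{A}_0}\ \sum_{\text{loopless }i\to v_1\to\cdots\to v_m\to j} w_{i,v_1}w_{v_1,v_2}\cdots w_{v_m,j},\]
where $m\ge 0$, the endpoint $j$ and the intermediate nodes $v_1,\dots,v_m$ are $m+1$ distinct elements of $\mathcal{N}\backslash\mathcal{A}_0=\{j_1,\dots,j_k\}$ (in particular $m\le k-1$), and the path avoids $\mathcal{A}_0$ entirely after leaving $i$.

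Substituting $w_{a,b}=\alpha_a\beta_b$ makes each path weight telescope: each internal node $v_s$ appears once as the head of an edge (contributing $\beta_{v_s}$) and once as the tail of the next edge (contributing $\alpha_{v_s}$), the source contributes only $\alpha_i$, and the endpoint $j=v_{m+1}$ contributes only $\beta_j$, so that $w_{i,v_1}\cdots w_{v_m,j}=\alpha_i\,\beta_{v_{m+1}}\prod_{s=1}^{m}\alpha_{v_s}\beta_{v_s}$. The sum over $i\in\mathcal{A}_0$ then factors out $\alpha_{\mathcal{A}_0}=\sum_{i\in\mathcal{A}_0}\alpha_i$, and grouping the paths by their length $m\in\{0,1,\dots,k-1\}$ reduces the theorem to the identity
\[\sum_{(v_1,\dots,v_{m+1})}\Big(\prod_{s=1}^{m+1}\beta_{v_s}\Big)\Big(\prod_{s=1}^{m}\alpha_{v_s}\Big)\;=\;h^{(m)}(\mathbf{\alpha},\mathbf{\beta}),\]
where the left-hand sum ranges over ordered $(m+1)$-tuples of distinct elements of $\{j_1,\dots,j_k\}$ and $v_{m+1}$ plays the role of the endpoint.

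To establish this identity I would pass from the ordered tuple $(v_1,\dots,v_{m+1})$ to its underlying set $T$, which fixes $\prod_{s}\beta_{v_s}$ and pulls it outside; the remaining task is to sum $\prod_{s=1}^{m}\alpha_{v_s}$ over all orderings of $T$. Conditioning on which element $w\in T$ occupies the last slot $v_{m+1}$, the first $m$ slots range over the $m!$ orderings of $T\backslash\{w\}$, each contributing $\prod_{u\in T\backslash\{w\}}\alpha_u$; summing over the $m+1$ choices of $w$ gives $m!\sum_{w\in T}\prod_{u\in T\backslash\{w\}}\alpha_u$, which is exactly $f^{m}$ evaluated at the $\alpha$'s of $T$ (choosing an $m$-element subset of those $\alpha$'s is the same as omitting one), while the $m=0$ case is the trivial $\sum_{v_1}\beta_{v_1}$, consistent with $f^{0}\equiv 1$. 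Writing $T=\{j_{l_1},\dots,j_{l_{m+1}}\}$ and summing over all such $T$ recovers $h^{(m)}(\mathbf{\alpha},\mathbf{\beta})$ as defined, which finishes the argument. I expect the only delicate step to be precisely this ordered-to-unordered bookkeeping --- correctly respecting the asymmetry between the endpoint (which carries only a $\beta$-factor) and the internal nodes (which carry both an $\alpha$- and a $\beta$-factor), and then recognizing the symmetric sum that emerges as $f^m$; everything else is direct substitution. An alternative route is induction on $k=|\mathcal{N}|-|\mathcal{A}_0|$ via the recursions of Equations~\ref{eq-sigma-a0} and~\ref{eq-sigma-i}, but the combinatorial accounting there is no lighter.
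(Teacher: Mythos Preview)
Your proposal is correct. The paper's own proof is a single sentence --- ``by direct application of Equations~\ref{eq-sigma-i} and~\ref{eq-sigma-a0}'' --- with no details at all, so you have in effect supplied the argument the paper omits. Where the paper gestures at the recursive formulation, you work instead from the equivalent expanded self-avoiding path sum derived just before those equations; this is the same content, and your ordered-to-unordered bookkeeping (fixing the endpoint $v_{m+1}$, collapsing the $m!$ orderings of the internal nodes, and recognizing the resulting sum over $m$-subsets of an $(m{+}1)$-set as $f^m$) is exactly the computation needed to make ``direct application'' precise. The inductive alternative you mention at the end is closer in spirit to the paper's one-liner, but as you note it buys nothing in terms of simplicity.
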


\begin{proof}
The proof is by direct application of Equations~\ref{eq-sigma-i} and \ref{eq-sigma-a0}. 
\end{proof}

\paragraph{USLT model}
For the Uniform Susceptance model, from the above general equation for UISLT, by setting $\mathbf{\alpha}  = (1,\ldots,1)$, and noting that,

\[f^{m}(\alpha_{j_{l_1}},\ldots,\alpha_{j_{l_{m+1}}}) = (m+1)! \]

\[ \alpha_{\mathcal{A}_0} = |\mathcal{A}_0| \]
we get, 

\[ \sigma^{(\mathcal{N},\mathcal{A}_0)} = |\mathcal{A}_0| + |\mathcal{A}_0| \sum_{m=0}^{k-1} f^{m+1} (\mathbf{\beta}) \] 
where $f^{m}$ is the same as defined earlier. 

Hence in this case $\sigma^{(\mathcal{N},\mathcal{A}_0)}$ is an increasing function of $\mathbf{\beta}$. Thus to maximize $\sigma^{(\mathcal{N},\mathcal{A}_0)}$, we need to pick $\mathcal{A}_0$ such that, the nodes with maximum $\beta_i$ are left out. Thus the optimal $\mathcal{A}_0$ in this case, is the set of $K$ nodes with least $\beta_i$ values. This also makes intuitive sense, since by picking this $\mathcal{A}_0$ (the least susceptible nodes), we ensure that the inactive nodes are the most susceptible ones, and hence maximizing expected cardinality of the terminal set. 

It turns out that when we apply the PageRank algorithm, we get the stationary probability as, 

\[ \pi_{i} = \frac{1/\beta_i}{\sum_{j} 1/\beta_j} \] 

Thus choosing the nodes with top-k $\pi_i$ yields us the same optimal set, i.e., the set of $K$ nodes with least $\beta_i$ values.
    
\paragraph{UILT model}
For the Uniform Influence model, from the above general equation for UISLT, by setting $\mathbf{\alpha}  = (1,\ldots,1)$, we get,

\begin{eqnarray*}
\lefteqn{h^{m} (\mathbf{\alpha})}\\
&=&\sum_{ \{l_1,\ldots,l_{m+1}\} \subseteq \{1,\ldots,k\} } f^{m}(\alpha_{j_{l_1}},\ldots,\alpha_{j_{l_{m+1}}})\\
\end{eqnarray*}

Hence, \[h^{m}(\mathbf{\alpha}) = (k-m) f^m (\alpha_{j_1}, \ldots, \alpha_{j_k} ) \] 

\[ \sigma^{(\mathcal{N},\mathcal{A}_0)} = |\mathcal{A}_0| + k \alpha_{\mathcal{A}_0} + \alpha_{\mathcal{A}_0} \sum_{m=1}^{k-1} (k-m) f^m (\alpha_{j_1}, \ldots, \alpha_{j_k} ) \] 

Hence in this case $\sigma^{(\mathcal{N},\mathcal{A}_0)}$ is an increasing function of $\alpha_{\mathcal{A}_0}$, fixing $|\mathcal{A}_0|$ to be $K$. Thus to maximize $\sigma^{(\mathcal{N},\mathcal{A}_0)}$, we need to pick $\mathcal{A}_0$ such that $\alpha_{\mathcal{A}_0}$ is maximized. Thus the optimal $\mathcal{A}_0$ in this case, is the set of $K$ nodes with highest $\alpha_i$ values. This also makes intuitive sense, since $\alpha_i$ is a measure of influence, and the $\mathcal{A}_0$ thus obtained would be the set of most influential nodes.

It turns out that when we apply the PageRank algorithm, we get the stationary probability as, 

\[ \pi_{i} = \frac{\alpha_i}{\sum_{j}\alpha_j} \] 

Thus choosing the nodes with top-k $\pi_i$ yields us the same optimal set, i.e., the set of $K$ nodes with highest $\alpha_i$ values.

\subsubsection{UISLT model and PageRank}
\label{sec-uislt-pagerank}
But in general, the PageRank algorithm need not be optimal for the UISLT case. For PageRank, in the UISLT case, the stationary probability is given by,

\[ \pi_{i} = \frac{\alpha_i/\beta_i}{\sum_{j} \alpha_j/\beta_j} \] 

Hence one might suspect that picking the nodes in increasing order of $\alpha_i/\beta_i$ could be optimal. But it turns out to be false, since a node with $\beta_i$ very close to zero could get chosen as the most influential node irrespective of its $\alpha_i$.  If we restrict the $\beta_i$, by not allowing it vary much, we see that the PageRank algorithm gives a very good approximation of the greedy solution.

The following simulation was conducted on a complete graph with 50 nodes with the UISLT model. The $\alpha_i$'s were picked at random from a uniform distribution over $[0,1]$ and $\beta_i$'s were picked with a uniform distribution over $[\frac{0.5}{\sum_{j\neq i}\alpha_j},\frac{1}{\sum_{j\neq i}\alpha_j}]$. It is found that PageRank performs on par with the greedy algorithm. Results are shown in Figure~\ref{uislt_50}.

\begin{figure}[http]
\centerline{\includegraphics[scale=0.3]{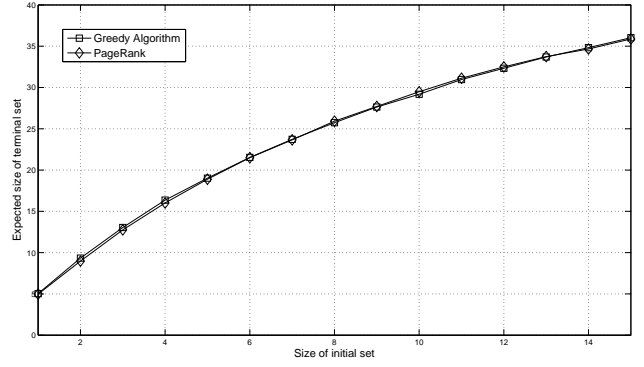}}
\caption{UISLT on a complete influence graph of 50 nodes, with $\alpha_i$'s and $\beta_i$'s being picked as described in Section~\ref{sec-uislt-pagerank}}
\label{uislt_50}
\end{figure}

\subsection{Node Degree based Model}
In this class of models, we start with an undirected graph without self-loops, whose adjacency matrix is given by $A$. We then generate the influence matrix $\mathbf{W}$ by normalizing the adjacency matrix as follows:

\begin{equation} 
w_{i,j} = a_{i,j} / d_{j} 
\label{deg-infl-eqn}
\end{equation}
where $d_{j} = \sum_{i} a_{i,j}$ is the degree of the node $j$.

Let us restrict our attention to \emph{acyclic graphs}. We then have the following theorem.

\begin{theorem}
Consider an acyclic undirected graph $\mathcal{N}$ represented by the adjacency matrix $A$. Let the influence matrix be generated by Equation \ref{deg-infl-eqn}. Then, for any node $i \in \mathcal{N}$,
\[ \sigma^{(\mathcal{N},i)} = d_{i} + 1 \]
\end{theorem}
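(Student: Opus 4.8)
The plan is to prove the statement by induction on $|\mathcal{N}|$, using the recursion of Equation~\ref{eq-sigma-i} together with the fact that deleting a vertex from a forest leaves a forest. Since $w_{l,m}=0$ whenever $l$ and $m$ lie in different connected components, the influence started from $i$ never leaves the connected component $C_i$ of $i$, so $\sigma^{(\mathcal{N},i)}=\sigma^{(C_i,i)}$; moreover $d_v$ is the same whether computed in $\mathcal{N}$ or in $C_i$, for every $v\in C_i$. Hence we may assume $\mathcal{N}$ is a tree. The base case $|\mathcal{N}|=1$ is immediate, since then $d_i=0$ and $\sigma^{(\mathcal{N},i)}=1$.

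For the inductive step, apply Equation~\ref{eq-sigma-i}. Because the graph is simple with no self-loops, $w_{i,j}=a_{i,j}/d_j$ equals $1/d_j$ when $j$ is a neighbour of $i$ and is $0$ otherwise, so
\[
\sigma^{(\mathcal{N},i)} = 1 + \sum_{j \sim i} \frac{1}{d_j}\,\sigma^{(\mathcal{N}\backslash\{i\},j)} .
\]
Now $\mathcal{N}\backslash\{i\}$ is again a forest, and the influence started from $j$ stays inside the component $T_j$ of $j$ in $\mathcal{N}\backslash\{i\}$, so $\sigma^{(\mathcal{N}\backslash\{i\},j)}=\sigma^{(T_j,j)}$. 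The crux is to compare the weights inherited by $T_j$ with the degree-normalized weights \emph{intrinsic} to the tree $T_j$. I would use that in a tree the deletion of $i$ splits it into subtrees, each meeting exactly one neighbour of $i$: if $m\in T_j$ with $m\ne j$ were adjacent to $i$, then the edges $m\!-\!i$, $i\!-\!j$ together with the path from $j$ to $m$ inside $T_j$ would form a cycle; hence every such $m$ keeps all of its $\mathcal{N}$-neighbours inside $T_j$, so $\deg_{T_j}(m)=d_m$, while only $j$ itself loses its edge to $i$, giving $d_j=\deg_{T_j}(j)+1$. Therefore the weights inherited by $T_j$ agree with $a_{l,m}/\deg_{T_j}(m)$ for every $m\ne j$, and differ from the intrinsic degree-normalized weights of $T_j$ only in the entries $w_{l,j}$, i.e.\ in the influence \emph{into} the new source node $j$.

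To finish, I would invoke the observation already recorded in the paper that, once the initial set is fixed, the weights into its nodes do not affect the influence; with initial set $\{j\}$ this yields $\sigma^{(T_j,j)}$ (inherited weights) $=\sigma^{(T_j,j)}$ (intrinsic degree-normalized weights), and the latter equals $\deg_{T_j}(j)+1=d_j$ by the induction hypothesis applied to the strictly smaller tree $T_j$. Substituting,
\[
\sigma^{(\mathcal{N},i)} = 1 + \sum_{j \sim i}\frac{1}{d_j}\cdot d_j = 1 + d_i ,
\]
which closes the induction. The only real obstacle is the bookkeeping in the middle paragraph: the subproblem produced by the recursion is not literally the degree-normalized model on $T_j$, and one must notice that it becomes so after discarding the (provably irrelevant) in-weights of the new source node — everything else then follows from the forest-splitting structure.
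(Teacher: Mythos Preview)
Your argument is correct and is essentially the paper's own proof: both root the tree at $i$, apply Equation~\ref{eq-sigma-i}, and show inductively that the influence from each neighbour $j$ in the pruned graph equals $d_j$, so the sum collapses to $1+d_i$. The paper organizes the induction as a bottom-up sweep over depth levels while carrying the original $\mathcal{N}$-weights throughout, whereas you induct on $|\mathcal{N}|$ and add the (correct) observation that re-normalizing inside each $T_j$ only alters the in-weights of the new seed $j$, which are irrelevant to $\sigma^{(T_j,j)}$; this is a minor bookkeeping variation, not a different idea.
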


\begin{proof}
Given the acyclic graph $\mathcal{N}$ and the node $i$, view the graph as a tree $\mathcal{T}$ of depth $D$, with node $i$ as the root. For any node $j$ in the tree $\mathcal{T}$, let $P(j)$ be the parent of node $j$ in $\mathcal{T}$, and $C(j)$ be its immediate child nodes. Define,

\[ L_{0} = \{ j \in \mathcal{T} : C(j) = \emptyset \}  \bigg( \neq \emptyset \bigg) \]
and for $0 < k \leq D$,

\[ L_{k} = \{ j \in \mathcal{T}: C(j) \subseteq \bigcup_{t=0}^{k-1} L_t, C(j)\cap L_{k-1} \neq \emptyset \} \]

Hence by definition of depth $D$ we have, $L_D =\{i\}$ and it is easy to see that $L_k$'s partition nodes in $\mathcal{N}$ into sets of nodes having the same depth.

By Equation~\ref{eq-sigma-i} we have,

\begin{equation}
\sigma^{(\mathcal{N}, i)} = 1 +  \sum_{j\in C(i)} \frac{1}{|C(j)|+1} \sigma^{(\mathcal{N}\backslash P(j), j)} 
\label{infl-child-eqn}
\end{equation}
where $P(j) =i$ and $j\in L_0 \cup \cdots \cup L_{D-1}$.

We shall prove inductively that, $\forall j$,
\begin{equation}
\sigma^{(\mathcal{N}\backslash P(j), j)} = |C(j)| + 1 
\label{claim-deg-infl}
\end{equation}
We know that if $j \in L_{0}$, then it is true, since in that case $\sigma^{(\mathcal{N}\backslash P(j), j)} = 1$ and $C(j) =0$. Assume that the claim is valid for $j \in L_{0}, L_{1}, \cdots, L_{k}$. 

For $j \in L_{k+1}$,

\begin{eqnarray*}
\sigma^{(\mathcal{N} \backslash P(j), j)} &=& 1 +  \sum_{l\in C(j)} \frac{1}{|C(l)|+1} \sigma^{(\mathcal{N}\backslash P(l), l)}\\
&=& 1 + \sum_{l \in C(j)}  \frac{|C(l)|+1}{|C(l)|+1} \\
&=& 1+ |C(j)|
\end{eqnarray*}
The second equality in the above set of equations, is because Claim~\ref{claim-deg-infl} is valid for $l \in \cup_{m=0}^{k} L_{m}$. Thus substituting the above in Equation~\ref{infl-child-eqn}, we have

\[ \sigma^{(\mathcal{N}, i)} = |C(i)| + 1 = d_i + 1 \]
\end{proof}
Thus it is found that the most influential node is the node with the highest degree. In order to pick the second node for the greedy algorithm we need to maximize $\sigma^{(\mathcal{N}, i\cup j)}$ where $i$ is the node with the highest degree. By using Equations~\ref{eq-sigma-i} and \ref{eq-sigma-a0} we can write,

\[ \sigma^{(\mathcal{N},i\cup j)} = \sigma^{(\mathcal{N}\backslash i,j)} + \sigma^{(\mathcal{N}\backslash j,i)} \]
\[ \sigma^{(\mathcal{N},i)} = \sigma^{(\mathcal{N}\backslash j,i)} + w_{i,j} \sigma^{(\mathcal{N}\backslash i,j)} \]
\[ \sigma^{(\mathcal{N},j)} = \sigma^{(\mathcal{N}\backslash i,j)} + w_{j,i} \sigma^{(\mathcal{N}\backslash j,i)} \]
From the above expressions, we find that if $i$ and $j$ are high degree nodes, their net influence can be approximated well by the sum of their individual influences, since $w_{i,j}$ and $w_{j,i}$ are small. Extending this further, we hence see that the solution of picking the high degree nodes will give us a very good approximation of the greedy solution.

By applying the PageRank algorithm using $P=\mathbf{W}^T$ as the transition probability matrix, where $\mathbf{W}$ is as defined in Equation~\ref{deg-infl-eqn} we get the stationary probability to be,

\[ \pi_i = \frac{d_i}{\sum_j d_j} \] 
and we find that PageRank algorithm matches with the heuristic of choosing the nodes according to degree and hence will give a good approximation of the greedy solution.

We also tested PageRank algorithm on undirected graphs which may have cycles. It turns out that there is still a high correlation (see Figure~\ref{deginflcorr}) between the degree of a node and its individual influence. Hence even in this case we can pick nodes in the decreasing order of degree to get a good estimate of the optimal initial set. 
\begin{figure}[http]
\centerline{\includegraphics[scale=0.3]{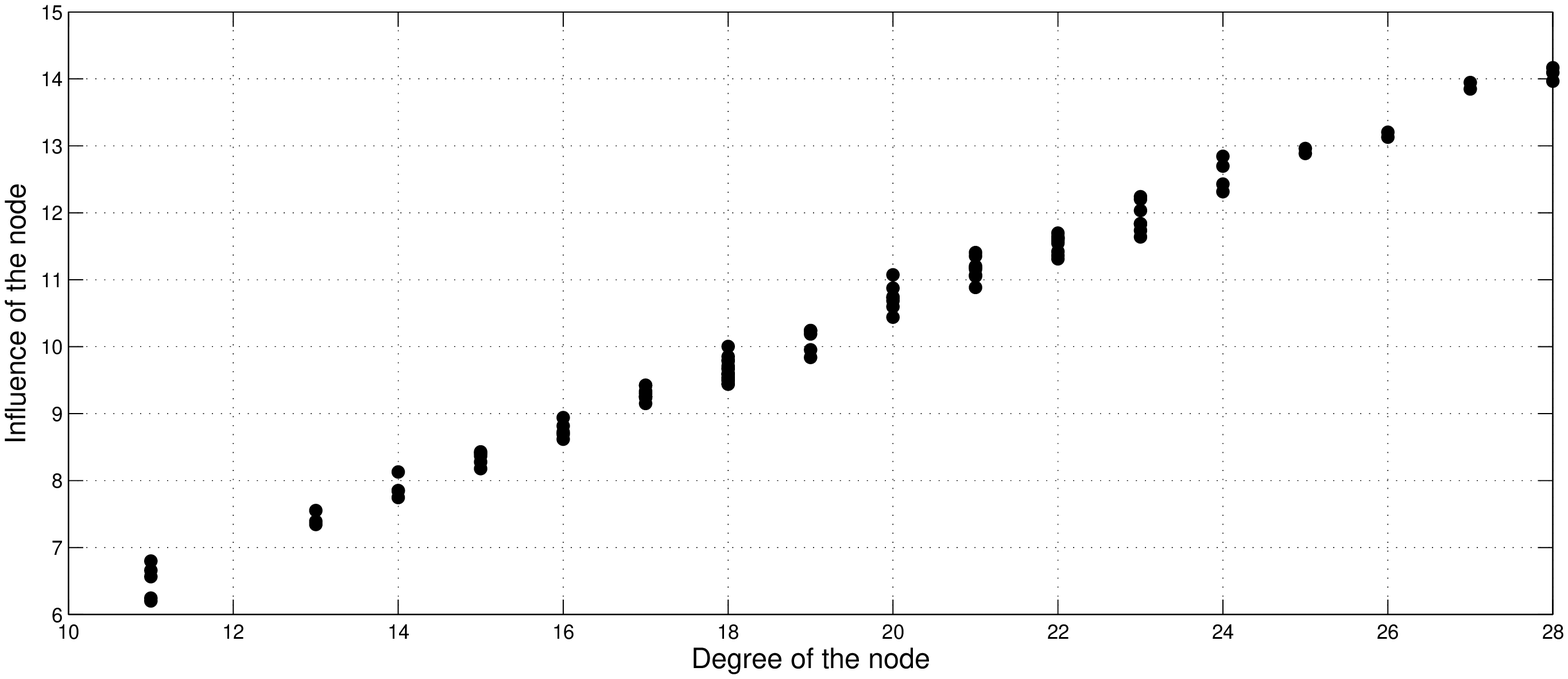}}
\caption{Scattergram between degree and influence of a node under the degree-based influence model in an Erdos-Renyi random graph of 100 nodes}
\label{deginflcorr}
\end{figure}

\section{Improving the Greedy Algorithm}
\label{sec-g1s}
As shown by Kempe et al.\ the greedy hill-climbing solution achieves $(1-1/e)$ approximate solution for the influence maximization problem \cite{kempe-etal03max-spread-infl}. But finding the initial set using the greedy algorithm is computationally quite expensive. There have been several efforts in the literature to improve the execution speed of the greedy algorithm. In this section, based on the insights obtained from the analytical expressions, we provide two techniques, namely thresholding and restriction, which achieve a very close approximation to the greedy solution.

 During the greedy algorithm, the first stage involves computation of $\sigma^{(\mathcal{N},i)}$ for all nodes $i\in \mathcal{N}$. One can rank the nodes in decreasing order of individual influences to yield a rank list which we shall refer to as \textbf{G1} list. One possible solution for influence maximization is to pick the top $K$ nodes to be the initial set. This solution is not as effective as the greedy solution, because it may contain \emph{dummy} nodes, i.e., nodes which in spite of having a high individual influence, fail to provide high marginal contributions to the greedy solution set and hence get rejected by the greedy algorithm. 
 
 We classify the \emph{dummy} nodes into two categories, namely the \emph{leechers} and \emph{subordinates}. Let $\mathcal{X}$ denote the set of nodes in \textbf{G1} which are above $i$ and have been picked to be part of the initial set. Then a node $i$ is a \emph{leecher} to set $\mathcal{X}$, if

\[ \sigma^{(\mathcal{N},i)} \approx \sum_{j \in \mathcal{X}} w_{i,j} \sigma^{(\mathcal{N}\backslash i,j)}\] 
 
This means that the node $i$ will have almost nil marginal contribution to a set which already contains $X$, since it primarily derives all its influence from those nodes.

We define a node $i$ to be an $\alpha$-subordinate of set $\mathcal{X}$ if 

\[ g_{i}^{(\mathcal{N}, \mathcal{X})} > \alpha \]

This means that node $i$ gets activated at least $\alpha$ fraction of the time when we begin with $\mathcal{X}$ as the initial set. Thus, for high enough $\alpha$, the marginal contribution of this node to the set $\mathcal{X}$ will be much smaller than its individual influence. If one can identify and eliminate such leechers and subordinates from \textbf{G1} while picking the initial set, then a more effective initial set could be obtained.

We use two techniques namely \emph{thresholding} and \emph{restriction} to filter out the subordinate nodes and leechers respectively. Thresholding involves comparing the  $g_{i}^{(\mathcal{N}, \mathcal{X})}$ with $\alpha$ and restriction involves evaluating $\sigma^{(\mathcal{N}\backslash \mathcal{X}, i)}$ to pick the next best node. One can also choose to use only one of those techniques, with slightly reduced effectiveness.

\subsection{G1-Sieving Algorithm}
In this algorithm, we first evaluate $\sigma^{(\mathcal{N},i)}$ for $i\in \mathcal{N}$ and obtain the \textbf{G1} list. We start with $\mathcal{X}=\{i\}$. We remove the nodes which are $\alpha$-subordinates of set $\mathcal{X}$ by evaluating $g_{i}^{(\mathcal{N}, \mathcal{X})}$ for all nodes and comparing with the threshold $\alpha$. For the remaining nodes, we compute $\sigma^{(\mathcal{N}\backslash \mathcal{X},i)}$ and pick the node that has the highest value. One can also discard the nodes which have a value very close to zero, since these nodes are the leechers. We add the picked node to $\mathcal{X}$ and repeat the procedure until we have a set of size $K$ or we exhaust the entire list. The algorithm is shown in Algorithm~\ref{algo-g1s}.  

\begin{algorithm}
\LinesNumbered
Evaluate $\sigma^{(\mathcal{N},i)}$ for all $i \in \mathcal{N}$\;
Sort nodes in decreasing order of $\sigma^{(\mathcal{N},i)}$ to get G1\;
$\mathcal{X}= G1(1)$\;
$G1 = G1 \backslash \mathcal{X}$\;
\For{$c = 2$ to $K$}{
  \If{$G1 = \emptyset$}{
    break\;
  }
  \For{$i \in G1$}{
  
    \If{$g_{i}^{(\mathcal{N},\mathcal{X})} > \alpha \ \| \ \sigma^{(\mathcal{N}\backslash \mathcal{X}, i)} < \epsilon$}{
	  $G1 = G1 \backslash \{i\}$\;
	  continue\;
	}
	Evaluate $\sigma^{(\mathcal{N}\backslash \mathcal{X}, i)}$\;
  }
  $v = \arg\max_{i \in G1} \sigma^{(\mathcal{N}\backslash \mathcal{X}, i)}$\;
  $\mathcal{X} = \mathcal{X}\cup v$\;
  $G1 = G1 \backslash \{v\}$\;
}
\caption{G1-Sieving Algorithm}
\label{algo-g1s}
\end{algorithm} 

\section{Simulations}

\subsection{Coauthorship Networks}
 
Newman \cite{newman01scientific-collaboration} observed that scientific colloboration networks are excellent examples of social networks. In such networks, each node represents an author in the scientific community under consideration, and an edge exists between two nodes $i$ and $j$ if those two authors are listed as co-authors at least in one of the papers. Newman in \cite{newman01scientific-collab-weightednetworks} explains a method by which the strength of collaboration (symmetric) between two authors can be extracted. We use this data to obtain the Linear Threshold model parameters. The process is as follows.

Let $\mathcal{R}$ denote the set of all papers under consideration in the scientific community, excluding the papers that only have a single author. For each $r \in \mathcal{R}$, let $n_r$ represent the number of coauthors for paper $r$. Let $\mathcal{N}$ represent the union of all authors of the papers in $\mathcal{R}$. Define $\delta(i,r)$ to be 1 if author $i$ was a co-author of paper $r$ and zero otherwise. Then ${\Tilde{w}}_{i,j}$ representing the strength of collaboration between authors $i$ and $j$, for $i\neq j$, is given by,

\[ {\Tilde{w}}_{i,j} = \sum_{r \in \mathcal{R}} \frac{\delta(i,r) \delta(j,r)}{n_r - 1} \]

We do not define terms of the form ${\Tilde{w}}_{i,j}$ since they do not represent any measure of collaboration. Now , by using these ${\Tilde{w}}_{i,j}$ 's, we obtain the entries of influence matrix $\mathbf{W}$ by normalizing. i.e.,

\[ w_{i,j} = \frac{{\Tilde{w}}_{i,j}}{\sum_{k=1,k\neq j}^{|\mathcal{N}|} {\Tilde{w}}_{i,j}} \]

The simulations in the following sections are carried out on a coauthorship network of NetScience community containing 1589 nodes.

\subsection{Comparing PageRank with the Greedy Algorithm}

In Section~\ref{examples} we examined various cases, where the PageRank algorithm was either optimal or performed very well. Here we report simulations comparing PageRank and Greedy algorithm on the Netscience dataset. They are also compared against heuristics such as the out-degree(obtained by counting the number of outgoing edges) and the weighted out-degree(summing up the weights on the outgoing edges).  In the scenario where the $w_{i,j}$'s are obtained as above, it turned out that the PageRank algorithm's performance was much below that of the greedy algorithm. We ran another set of simulations where $\mathbf{\Tilde{W}}$ was interpreted as directed, for experimental purposes. This means that the ``strength of collaboration'' ${\Tilde{w}}_{i,j}$ between two authors $i$ and $j$ was completely assigned to the author with a higher index, i.e., for $i < j$, ${\Hat{w}}_{j,i}$ = ${\Tilde{w}}_{i,j}$ and ${\Hat{w}}_{i,j} = 0$. This would result in an influence graph which is directed and has no cycles. It was found that PageRank algorithm performed on par with the greedy algorithm. The results are shown in figures~\ref{compare-undirected} and \ref{compare-directed}.

\begin{figure}[http]
\centerline{\includegraphics[scale=0.3]{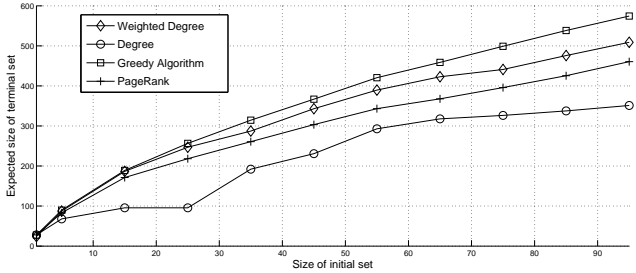}}
\caption{Comparison of PageRank and Greedy - undirected case}
\label{compare-undirected}
\end{figure}
 
\begin{figure}[http]
\centerline{\includegraphics[scale=0.3]{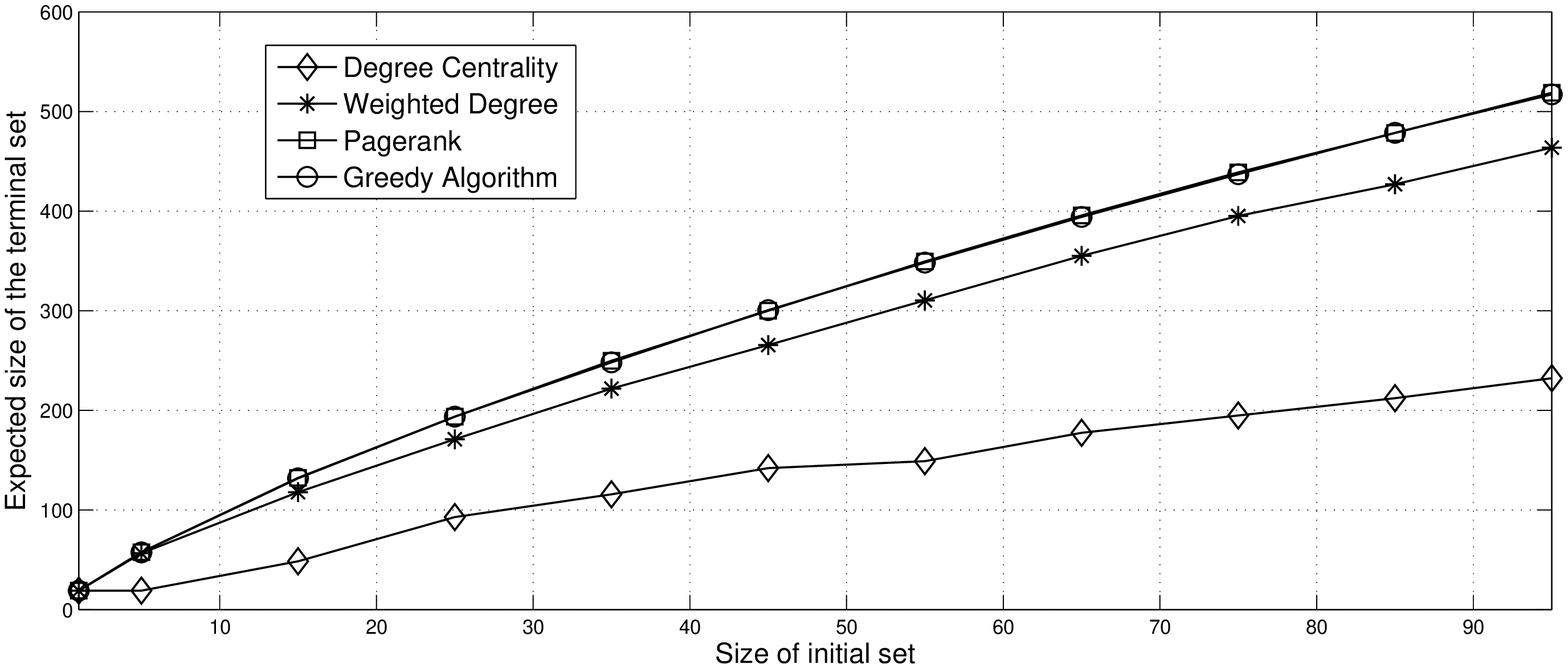}}
\caption{Comparison of PageRank and Greedy - directed case}
\label{compare-directed}
\end{figure}
 
We see that this is because, the PageRank algorithm essentially works with random walks on the given graph and finding the stationary probability. But as we have pointed out in Section~\ref{DTMC-interpret} maximizing the spread of influence involves random walks that are ``self-avoiding''. Thus it turns out that in the directed case, where there are no cycles, PageRank algorithm works fine, whereas in the undirected case, it fails to perform on par with the greedy algorithm. Thus, even though it can be calculated efficiently to give a rough estimate of the \emph{network effect} of a node, in many cases it might perform poorly compared to the greedy algorithm.

\subsection{Comparison of G1-Sieving with Greedy Algorithm}

We carried out two experiments with the NetScience dataset for the G1-sieving algorithm. The first involved using only the thresholding technique with various values for $\alpha$. It is interesting to note the change in performance of the thresholding technique with variation of $\alpha$ as shown in Figure~\ref{greedy_various_sieving}. For low values of $\alpha$, the algorithm retains only the nodes whose influence domains are almost disjoint, and hence performs badly. Also, for high values of $\alpha$, the algorithm does not remove many nodes, and the list almost resembles the \textbf{G1} list. It turned out that for this dataset, thresholding with $\alpha=0.3$ provided a very good approximation of the greedy solution. 

In the second case, we implement the G1-Sieving algorithm with the threshold $\alpha=0.3$ and with restriction. It is found that restriction provides a slight improvement over the solution with only thresholding. The results are shown in Figure~\ref{greedy_sieving_restriction}. We find that G1-Sieving algorithm performs on par with the greedy algorithm.\newline
 
\begin{figure}[http]
\centerline{\includegraphics[scale=0.3]{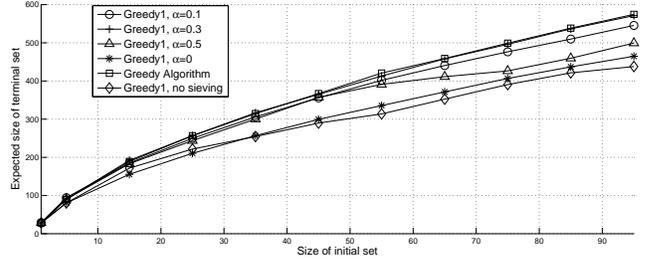}}
\caption{Comparison of various sieving thresholds with Greedy algorithm}
\label{greedy_various_sieving}
\end{figure}
 
\begin{figure}[http]
\centerline{\includegraphics[scale=0.3]{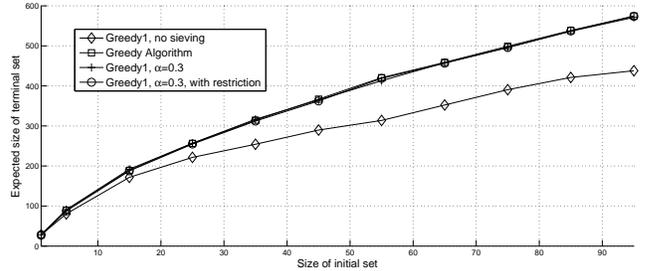}}
\caption{Comparison of G1-Sieving with Greedy algorithm}
\label{greedy_sieving_restriction}
\end{figure}

 When $\mathcal{A} \subseteq \mathcal{B} \subseteq \mathcal{N}$ and set sizes are very small compared to $|\mathcal{N}|$, due to the monotonicity of the influence function, evaluating $\sigma^{(\mathcal{N},\mathcal{A})}$ is faster than $\sigma^{(\mathcal{N},\mathcal{B})}$, since the latter involves more number of activations, whereas for set sizes closer to $|\mathcal{N}|$, evaluating $\sigma^{(\mathcal{N},\mathcal{B})}$ is faster than $\sigma^{(\mathcal{N},\mathcal{A})}$, since the former involves fewer nodes to be activated. Also for a given set size, the time taken for evaluating the influence of a set $\mathcal{A}$ increases with the expected influence of the set. 
 
 Keeping these in mind, we see that G1-Sieving algorithm runs much faster than greedy algorithm, since it evaluates influences for only sets with single node as the initial set, and as it proceeds it evaluates influences for fewer influential nodes in a restricted graph. Also, the technique of employing only thresholding, performs almost on par with the Greedy algorithm, given the fact that it involves evaluating $\sigma^{(\mathcal{N},i)}$ for all $i \in \mathcal{N}$, i.e., the first stage of greedy algorithm, and then subsequently evaluating influences (to get the activation probabilities $g_{j}^{(\mathcal{N},\mathcal{X})})$ only once per round. 

For the G1-Sieving algorithm, we obtained the optimal $\alpha$ using simulations. It would be interesting to look at ways to estimate $\alpha$ based on the graph structure. One can also use variable $\alpha$, by having higher threshold for initial nodes while reducing it for later stages.

\section{Discussion}

In this paper we have derived an analytical expression for the influence of a given set in a social network under the Linear threshold model. The insights thus obtained helped us propose a better algorithm for choosing the initial set to maximize the spread of influence. A similar approach could be adopted for the independent cascade model. This will help us explain why certain heuristics work well and also help in developing better algorithms. It is also to be noted that the current framework can be easily extended to the time constrained influence maximization problem, where the activation process is terminated after a fixed number of steps. Another interesting implication of this work is the role played by self avoiding random walks in the analytical expression for the influence function. Finding an efficient way to compute these probabilities will speed up the influence computations. PageRank algorithm was found to be sub-optimal since it was working on the assumption of random walks which could involve cycles. As an interesting aside, one can even model the walk of random surfer on the Web graph to be a ``self-avoiding'' random walk which can have some implications on the Web-page ranking algorithms. 

\bibliographystyle{unsrt}
\bibliography{bib-infl-nodes}

\end{document}